\documentclass[a4paper,12pt]{article}
\pdfoutput=1
\usepackage{test,graphicx}
\usepackage{pgfplots}
\pgfplotsset{compat=1.18}

\usepackage{hyperref}
\hypersetup{colorlinks,citecolor=blue,urlcolor=magenta}
\usepackage{doi}

\usepackage[style=ext-authoryear-comp,
sorting=nyt,
dashed=false, 
maxcitenames=2, 
maxbibnames=99, 
uniquelist=false,
uniquename=false,
useprefix=true,
giveninits=true, 
natbib, 
date=year 
]{biblatex}

\AtBeginRefsection{\GenRefcontextData{sorting=ynt}}
\AtEveryCite{\localrefcontext[sorting=ynt]}

\DeclareFieldFormat{pages}{#1} 
\renewbibmacro{in:}{\ifentrytype{article}{}{\printtext{\bibstring{in}\intitlepunct}}} 

\DeclareFieldFormat[article,inbook,incollection,inproceedings,patent,thesis,unpublished]{titlecase:title}{\MakeSentenceCase*{#1}} 

\addbibresource{localbib.bib}

\usepackage[inline,shortlabels]{enumitem}
\setlist[enumerate,1]{label=(\roman*)}

\usepackage[margin = 1.25in]{geometry}
\usepackage{setspace}
\onehalfspacing

\newcommand{\cK}{\mathcal{K}}
\newcommand{\cP}{\mathcal{P}}

\title{Comment on `Asset Bubbles and Overlapping Generations'\thanks{We thank Gadi Barlevy, Tomohiro Hirano, Gerhard Sorger, and two anonymous referees for comments and suggestions and Johar Cassa for research assistance. Toda acknowledges financial support from Japan Center for Economic Research; Japan Securities Scholarship Foundation; and the Joint Usage/Research Center, Institute of Economic Research, Hitotsubashi University (Grant ID: IERPK2515).}}
\author{Ngoc-Sang Pham\thanks{EM Normandie Business School, M\'etis Lab. Email: \href{mailto:npham@em-normandie.fr}{npham@em-normandie.fr}.} \and Alexis Akira Toda\thanks{Department of Economics, Emory University. Email: \href{mailto:alexis.akira.toda@emory.edu}{alexis.akira.toda@emory.edu}.}}
\date{\today\\
Accepted at \emph{Econometrica}}

\numberwithin{equation}{section}
\numberwithin{lem}{section}


\begin{document}
\maketitle

\begin{abstract}

\citet{Tirole1985} studied an overlapping generations model with capital accumulation and showed that the emergence of asset bubbles solves the capital over-accumulation problem. His Proposition 1(c) claims that if the dividend growth rate is above the bubbleless interest rate (the steady-state interest rate in the economy without the asset) but below the population growth rate, then bubbles are necessary in the sense that there exists no bubbleless equilibrium but there exists a unique bubbly equilibrium. We show that this result (as stated) is incorrect by presenting an example economy that satisfies all assumptions of Proposition 1(c) but its unique equilibrium is bubbleless. We also restore Proposition 1(c) under the additional assumptions that initial capital is sufficiently large and dividends are sufficiently small. We show through examples that these conditions are essential.

\medskip

\noindent
\textbf{Keywords:} asset price bubble, bubble necessity, dividend-paying asset, implicit function theorem, overlapping generations, resource curse.

\end{abstract}
\section{Introduction}

\citet{Tirole1985} studied under what conditions an asset price bubble (asset price exceeding the fundamental value defined by the present discounted value of dividends) can emerge in an overlapping generations (OLG) model with capital accumulation and showed that bubbles can solve the capital over-accumulation problem. Even after 40 years since its publication, \citet{Tirole1985}'s model remains highly relevant as it is one of the benchmark models to understand bubbles.

Although the vast majority of the subsequent literature on so-called ``rational bubbles'' has focused on bubbles attached to intrinsically worthless assets that do not pay any dividends (``pure bubble'' like fiat money or cryptocurrency),\footnote{See \citet{MartinVentura2018} and \citet{HiranoToda2024JME} for literature reviews on rational bubbles. For other approaches including heterogeneous beliefs and asymmetric information, see \citet{BrunnermeierOehmke2013} and \citet{Barlevy2025BubbleBook}.} \citet{Tirole1985}'s original analysis actually contains a discussion of bubbles attached to a dividend-paying asset. To state his result, let $G$ be the economic (population) growth rate, $G_d$ the dividend growth rate, and $R$ the steady-state interest rate in the absence of the asset.\footnote{Our notation corresponds to \citet{Tirole1985}'s by setting $G=1+n$, $G_d=1$, and $R=1+\bar{r}$.} Proposition 1(c) of \citet{Tirole1985} claims that if $R<G_d<G$, there exists no bubbleless equilibrium, there exists a unique bubbly equilibrium, which is asymptotically bubbly and the interest rate converges to $G$.

This comment provides a counterexample to Proposition 1(c) of \citet{Tirole1985} but restores it under additional assumptions. We immediately point out that the subsequent literature has overwhelmingly cited \citet{Tirole1985} not in the context of Proposition 1(c) but in the context of pure bubbles (bubbles attached to assets with zero dividends), which is valid. Thus, we do not dispute the key contribution of \citet{Tirole1985}. The reader may wonder why it is necessary to raise issues with a result that has been overlooked for four decades. The main reason is that there are limitations to pure bubble models including lack of realism, equilibrium indeterminacy, and inability to connect to the econometric literature that uses the price-dividend ratio (see, for instance, \citet[\S4.7]{HiranoToda2024JME}). As we discuss later, Proposition 1(c) of \citet{Tirole1985} has been reappraised only recently. Therefore, it is crucial to correctly understand a benchmark model with both a dividend-paying asset and capital accumulation such as \citet{Tirole1985}.

The idea of our counterexample (Proposition \ref{prop:C}) is as follows. We construct a standard production function $f(k)$ with $f'>0$ and $f''<0$ such that the wage $f(k)-kf'(k)$ is approximately linear in capital $k$ near $k=0$. Then if capital $k_t$ is small today, so is the wage. If future dividend $d_{t+1}$ is not too small, the asset price is not too small, so the budget constraint of the young (wage is spent on consumption, asset purchase, and future capital) forces future capital $k_{t+1}$ to be small. Thus, we may sustain an equilibrium in which $\set{k_t}$ converges to zero.\footnote{Thus, our counterexample shares a similarity with \citet{Chattopadhyay2008}, who provides a counterexample to the net dividend criterion of \citet{AbelMankiwSummersZeckhauser1989} for dynamic efficiency. However, our example is not directly related to \citet{Chattopadhyay2008}. We thank Gadi Barlevy for pointing out this issue.} 
Furthermore, we show that this is the unique equilibrium of the economy, the interest rate diverges to infinity, and there is no bubble (the asset price equals the present discounted value of dividends), although the model satisfies all assumptions of Proposition 1(c). This example can be understood through an analogy. During the 16th and 17th centuries, Spain experienced a significant influx of silver from its colonies in the Americas. Revenue from this wealth allowed Spain to import goods, which led to a decline in domestic manufacturing \citep{Drelichman2005}. This phenomenon is often referred to as the ``resource curse'' or the ``Dutch disease'' \citep{CordenNeary1982}. In our example, the asset initially pays large dividends relative to the capital stock but dividends decline over time; as a result, capital continues to decline.

Our counterexample has the feature that initial capital is so small that capital keeps declining, preventing the economy from converging to a steady state with positive capital. However, in Theorem \ref{thm:restore}, we restore Proposition 1(c) of \citet{Tirole1985} under the additional assumptions that initial capital is sufficiently large and dividends are sufficiently small. Example \ref{exmp:initial} shows that convergence to zero or a positive steady state is possible depending on the initial condition. Example \ref{exmp:D0} shows that the resource curse arises for any initial capital level by taking a sufficiently large initial dividend. Therefore, the assumptions of sufficiently large initial capital and sufficiently small dividends are essential for restoring Proposition 1(c). As the literature on rational bubbles tends to focus on the steady state and ignore the case $k=0$ due to the Inada condition, the fact that $k_t\to 0$ can robustly arise depending on the initial condition may be surprising.

\paragraph*{Related literature}

Proposition 1(c) of \citet{Tirole1985} concerns an environment in which an asset price bubble is the unique equilibrium outcome. This is a very important (though overlooked) insight, as asset price bubbles are often considered fragile and not robust \citep{SantosWoodford1997}. To our knowledge, \citet[\S7]{Wilson1981} was the first to point out such an example in an endowment economy.\footnote{We thank Herakles Polemarchakis for bringing our attention to \citet{Wilson1981}.} In the literature, as we document in the supplementary material \citep{PhamTodaCommentSuppl}, Proposition 1(c) of \citet{Tirole1985} has been referred to only a few times including \citet[p.~351]{Burke1996}, \citet*[Footnote 8]{AllenBarlevyGale2017},\footnote{A significantly revised version of \citet{AllenBarlevyGale2017} was published as \citet{AllenBarlevyGale2025}.} and several papers by Hirano and Toda. \citet{HiranoToda2025JPE} prove the necessity of bubbles (i.e., bubbles must emerge in every equilibrium) in modern macro-finance models including overlapping generations models and infinite-horizon models. Their \S V.A formally analyzed the \citet{Tirole1985} model in the special case with logarithmic utility, but the authors were unable to dispense with the assumption on an endogenous object, namely that capital is bounded away from zero. (Indeed, our counterexample features an equilibrium path in which capital converges to zero.) Our Theorem \ref{thm:restore} completely resolves this issue.

\section{Tirole (1985)'s model}

As \citet{Tirole1985}'s model is well known (see \citet[\S5.2]{BlanchardFischer1989} for a textbook treatment), our model description is brief. Time is discrete and denoted by $t=0,1,\dotsc$. There are overlapping generations of agents who live for two dates. Each agent is endowed with one unit of labor when young and none when old. Let $N_t=G^t$ be the population of generation $t$, where $G>0$ is the gross population growth rate. The utility function of generation $t$ is $U(c_t^y,c_{t+1}^o)$ , where $c_t^y,c_{t+1}^o$ denote consumption when young and old.

A representative firm produces the output using the neoclassical production function $F(K_t,L_t)$ (which includes undepreciated capital), where $K_t,L_t$ denote capital and labor inputs. Each agent supplies a unit of labor inelastically when young, so $L_t=N_t=G^t$ in equilibrium defined below. Let $k_t\coloneqq K_t/L_t=K_t/G^t$ be the capital per capita, $f(k)\coloneqq F(k,1)$, and assume $f'>0$, $f''<0$, $f'(0)=\infty$, and $f'(\infty)<G$.\footnote{\citet[p.~1501]{Tirole1985} explicitly states $f'(0)=\infty$ and implicitly assumes $f'>0>f''$. His Proposition 1(c) assumes the existence of a steady state $f'(k_b^*)=G$, which implies $f'(\infty)<G$.} The last condition rules out diverging paths (see the proof of Lemma \ref{lem:impossible2} below).

There is also a unit supply of an asset with infinite maturity. Let $D_t\ge 0$ be the (exogenous) dividend and $P_t\ge 0$ be the (endogenous) price. The young choose savings $s_t$ to maximize the lifetime utility. Given initial capital $K_0>0$, a \emph{perfect foresight equilibrium} consists of a sequence $\set{(P_t,R_{t+1},w_t,s_t,K_t)}_{t=0}^\infty$ of asset price, interest rate, wage, savings, and capital such that the following conditions hold:
\begin{subequations}
\begin{align}
    s_t&=\argmax_{s\in [0,w_t]}U(w_t-s,R_{t+1}s), \label{eq:eq_utility}\\
    (R_t,w_t)&=(f'(k_t),f(k_t)-k_tf'(k_t)), \label{eq:eq_profit}\\
    P_t&=\frac{1}{R_{t+1}}(P_{t+1}+D_{t+1}), \label{eq:eq_noarbitrage}\\
    N_ts_t&=K_{t+1}+P_t. \label{eq:eq_assetclear}
\end{align}
\end{subequations}
Here, condition \eqref{eq:eq_utility} is utility maximization; \eqref{eq:eq_profit} is the first-order condition for profit maximization; \eqref{eq:eq_noarbitrage} is the no-arbitrage condition between capital and asset; and \eqref{eq:eq_assetclear} is asset market clearing that equates aggregate savings (left-hand side) to the market capitalization of safe assets (right-hand side).

The \emph{fundamental value} of the asset is the present discounted value of dividends
\begin{equation}
    V_t\coloneqq \sum_{s=1}^\infty \frac{D_{t+s}}{R_{t+1}\dotsb R_{t+s}}. \label{eq:Vt}
\end{equation}
We say that the equilibrium is \emph{bubbleless} if $P_t=V_t$, and \emph{bubbly} if $P_t>V_t$. Furthermore, letting $p_t\coloneqq P_t/G^t$ be the detrended asset price, we say that the equilibrium is \emph{asymptotically bubbly} if $P_t>V_t$ and $\liminf_{t\to\infty}p_t>0$. It is convenient to define the long-run dividend growth rate by $G_d\coloneqq \limsup_{t\to\infty} D_t^{1/t}$
and the detrended dividend $d_t\coloneqq D_t/G^t$. See \citet{HiranoToda2025JPE} for more discussion of these concepts, especially their \S II and Definitions 1, 2.

\citet[p.~1502]{Tirole1985} imposes several assumptions on functions describing the equilibrium system. \citet[\S C.1]{PhamTodaWP} argue that we can justify these assumptions if the savings function $s(w,R)$ (the solution to the utility maximization problem \eqref{eq:eq_utility} given $(w_t,R_{t+1})=(w,R)$) is strictly increasing in $w$ and increasing in $R$. We can justify this assumption, in turn, if the utility function is additively separable as $U(c^y,c^o)=u(c^y)+v(c^o)$ and $v$ exhibits relative risk aversion bounded above by 1 \citep[Lemma 2.3]{PhamTodaWP}. Thus, we maintain the following assumption.

\begin{asmp}[Monotonicity of saving]\label{asmp:U}
The utility function $U$ is twice differentiable, strictly quasi-concave, satisfies the Inada condition, and the savings function $s(w,R)$ satisfies $s_w>0$ and $s_R\ge 0$.
\end{asmp}

Under the monotonicity condition on $s$, we obtain the following result, which is similar to Lemma 1 of \citet*{BosiHa-HuyLeVanPhamPham2018}.

\begin{lem}[Equilibrium system]\label{lem:system}
If Assumption \ref{asmp:U} holds, the equation
\begin{equation}
    Gx+p-s(f(k)-kf'(k),f'(x))=0 \label{eq:xeq}
\end{equation}
has at most one solution $x=g(k,p)>0$, which satisfies $g_k>0$ and $g_p<0$ on its domain. Letting $(k_t,p_t,d_t)=(K_t,P_t,D_t)/G^t$, given $k_0>0$, an equilibrium has a one-to-one correspondence with the system
\begin{subequations}\label{eq:system}
\begin{align}
    k_{t+1}&=g(k_t,p_t), \label{eq:system_k}\\
    p_{t+1}&=\frac{f'(k_{t+1})}{G}p_t-d_{t+1}.\label{eq:system_p}
\end{align}
\end{subequations}
\end{lem}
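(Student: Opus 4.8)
The plan is to handle the lemma's two assertions separately: first the uniqueness and monotonicity of the map $g$ defined by \eqref{eq:xeq}, which is a one-variable monotonicity argument combined with the implicit function theorem, and then the equivalence between equilibria and the detrended system \eqref{eq:system}, which is essentially a change of variables that matches the four equilibrium conditions \eqref{eq:eq_utility}--\eqref{eq:eq_assetclear} one by one.

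To define $g$ and prove uniqueness, I would fix $(k,p)$, write $w=f(k)-kf'(k)$, and view the left-hand side of \eqref{eq:xeq} as a function $\Phi(x)=Gx+p-s(w,f'(x))$ of $x>0$. Differentiating gives $\Phi'(x)=G-s_R(w,f'(x))f''(x)$. Since $s_R\ge 0$ by Assumption \ref{asmp:U} and $f''<0$ by concavity of $f$, the subtracted term is nonpositive, so $\Phi'(x)\ge G>0$; hence $\Phi$ is strictly increasing and \eqref{eq:xeq} has at most one root, which I name $g(k,p)$. Note this uses only $G>0$ and remains valid when $s_R\equiv 0$.

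For the comparative statics I would apply the implicit function theorem to the identity $Gg(k,p)+p-s(w(k),f'(g(k,p)))=0$, where $w(k)=f(k)-kf'(k)$ has $w'(k)=-kf''(k)>0$ for $k>0$. Differentiating in $p$ gives $g_p=-1/(G-s_Rf'')<0$, and differentiating in $k$ gives $g_k=s_w\,w'(k)/(G-s_Rf'')$, which is positive because $s_w>0$, $w'(k)>0$, and the common denominator is the positive quantity $\Phi'(g)>0$ computed above. Differentiability of $g$ follows from $\Phi'(g)\neq 0$ together with the assumed smoothness of $s$.

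Finally, to establish the one-to-one correspondence I would detrend the equilibrium conditions using $(k_t,p_t,d_t)=(K_t,P_t,D_t)/G^t$ and $N_t=G^t$. Asset clearing \eqref{eq:eq_assetclear} becomes $s_t=Gk_{t+1}+p_t$; substituting the factor prices $w_t=f(k_t)-k_tf'(k_t)$ and $R_{t+1}=f'(k_{t+1})$ from \eqref{eq:eq_profit} into the savings rule $s_t=s(w_t,R_{t+1})$ of \eqref{eq:eq_utility} shows that $x=k_{t+1}$ solves \eqref{eq:xeq} at $(k,p)=(k_t,p_t)$, so uniqueness yields \eqref{eq:system_k}. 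Detrending the no-arbitrage condition \eqref{eq:eq_noarbitrage} and rearranging gives $p_{t+1}=(f'(k_{t+1})/G)p_t-d_{t+1}$, which is \eqref{eq:system_p} after shifting the index. For the converse I would start from a solution $(k_t,p_t)$ of \eqref{eq:system} with $k_0$ given and $p_0$ free, reconstruct $(K_t,P_t,R_{t+1},w_t,s_t)$ by the same formulas, and verify each of \eqref{eq:eq_utility}--\eqref{eq:eq_assetclear} by reversing these steps. I expect the only nonroutine point to be the bookkeeping in this converse direction: confirming that the reconstructed savings $s_t=Gk_{t+1}+p_t$ genuinely coincides with the argmax $s(w_t,R_{t+1})$ (which is precisely what \eqref{eq:system_k} encodes through $g$) and that positivity $k_t>0$ is inherited from $g>0$, so that the constructed sequence is an actual equilibrium and not merely a formal solution of the recursion.
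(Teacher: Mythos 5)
Your proposal is correct and follows essentially the same route as the paper: strict monotonicity of $\Phi(x)=Gx+p-s(w,f'(x))$ in $x$ via $s_R\ge 0$ and $f''<0$, the implicit function theorem for the signs of $g_k$ and $g_p$, and detrending of the market-clearing and no-arbitrage conditions for the correspondence with \eqref{eq:system}. Your computation $g_k=s_w\,w'(k)/(G-s_Rf'')$ with $w'(k)=-kf''(k)$ is in fact slightly more careful than the paper's displayed $\Phi_k=s_wf''(k)$, which omits the factor $k$ (a harmless typo since the sign is unaffected).
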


\begin{exmp}[Logarithmic utility]\label{exmp:log}
Consider the logarithmic utility
\begin{equation}
    U(c^y,c^o)=(1-\beta)\log c^y+\beta \log c^o, \label{eq:utility_log}
\end{equation}
where $\beta\in (0,1)$ governs time preference. Then the savings function is $s(w,R)=\beta w$, which satisfies Assumption \ref{asmp:U}. The function $g$ in \eqref{eq:system_k} reduces to
\begin{equation}
    g(k,p)=\frac{\beta(f(k)-kf'(k))-p}{G}, \label{eq:g_log}
\end{equation}
whose domain is $(k,p)\in \R_{++}\times \R_+$ such that $p<\beta(f(k)-kf'(k))$.
\end{exmp}

By Lemma \ref{lem:system}, an equilibrium has a one-to-one correspondence with a sequence $\set{(k_t,p_t)}_{t=0}^\infty$ satisfying \eqref{eq:system}. Noting that \eqref{eq:system} is recursive and $k_0>0$ is given, an equilibrium has a one-to-one correspondence with the initial asset price $p_0$. For this reason, in what follows we often say ``$\set{(k_t,p_t)}_{t=0}^\infty$ is an equilibrium'' or ``$p_0$ is an equilibrium''. Using Lemma \ref{lem:system}, we can show that the set of the initial asset price $p_0$ in equilibrium, denoted $\cP_0$, is an interval (possibly a singleton), and the equilibrium paths satisfy some monotonicity property. The following lemma is an adaptation of Lemmas 4, 6, 10 of \citet{Tirole1985} and hence we omit the proof. (See \citet[Proposition 2.2]{PhamTodaWP}.)

\begin{lem}[Equilibrium monotonicity]\label{lem:p0}
If Assumption \ref{asmp:U} holds, the equilibrium set $\cP_0$ is an interval. Let $p_0,p_0'\in \cP_0$ and $p_0<p_0'$. Let $\set{(k_t,p_t)}_{t=0}^\infty$ satisfy the equilibrium system \eqref{eq:system}, $R_t=f'(k_t)$, $w_t=f(k_t)-k_tf'(k_t)$, and let $p_t=v_t+b_t$ be the fundamental-bubble decomposition obtained by $p_t\coloneqq P_t/G^t$ and $v_t\coloneqq V_t/G^t$ in \eqref{eq:Vt}. Define $(k_t',p_t',R_t',w_t',v_t',b_t')$ analogously. Then for all $t\ge 1$ we have $k_t>k_t'$, $p_t<p_t'$, $R_t<R_t'$, $w_t>w_t'$, $v_t\ge v_t'$, and $b_t<b_t'$.
\end{lem}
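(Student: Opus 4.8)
The plan is to run one forward induction on $t$ driven entirely by the two structural monotonicities from Lemma~\ref{lem:system}, namely $g_k>0$ and $g_p<0$, together with $f''<0$ and $f'>0$, and then to read the remaining inequalities off the resulting orderings of $k_t$ and $p_t$. First I would prove the pairwise comparison. Fix $p_0<p_0'$ in $\cP_0$; both orbits share the same $k_0$ and the same detrended dividends $d_t$. I claim $k_t>k_t'$ and $p_t<p_t'$ for all $t\ge 1$, proved jointly by induction. For the step, \eqref{eq:system_k} gives $k_{t+1}=g(k_t,p_t)$, and applying $g_k>0$ to $k_t\ge k_t'$ (with equality only at $t=0$) and $g_p<0$ to $p_t<p_t'$ yields $k_{t+1}>k_{t+1}'$. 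Then \eqref{eq:system_p} gives $p_{t+1}=\frac{f'(k_{t+1})}{G}p_t-d_{t+1}$; since $k_{t+1}>k_{t+1}'$ forces $0<f'(k_{t+1})<f'(k_{t+1}')$ by $f''<0$ and $0\le p_t<p_t'$, the product satisfies $f'(k_{t+1})p_t<f'(k_{t+1}')p_t'$ (the case $p_t=0$ being immediate because $f'>0$ and $p_t'>0$), whence $p_{t+1}<p_{t+1}'$. This closes the induction.

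The remaining inequalities follow termwise from $k_t>k_t'$. Since $f''<0$, $R_t=f'(k_t)<f'(k_t')=R_t'$. Since $w(k)=f(k)-kf'(k)$ has $w'(k)=-kf''(k)>0$, the wage is increasing in $k$, so $w_t>w_t'$. For the fundamental value, the ordering $R_{t+j}<R_{t+j}'$ for every $j\ge 1$ makes each discount factor $1/(R_{t+1}\cdots R_{t+s})$ larger in the unprimed economy, so with common dividends $D_{t+s}\ge 0$ the termwise comparison gives $V_t\ge V_t'$, i.e.\ $v_t\ge v_t'$. Finally $b_t=p_t-v_t<p_t'-v_t'=b_t'$ because $p_t<p_t'$ and $v_t\ge v_t'$.

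For the interval property I would squeeze an intermediate initial price. Take $p_0<p_0''<p_0'$ with $p_0,p_0'\in\cP_0$ and show, by the same induction run simultaneously against the lower orbit $(k_t,p_t)$ and the upper orbit $(k_t',p_t')$, that the $p_0''$-orbit stays admissible, with $k_t'<k_t''<k_t$, $p_t<p_t''<p_t'$, $k_t''>0$, and $p_t''\ge 0$. Non-negativity is inherited from below, $p_t''>p_t\ge 0$; the orderings of $k_{t+1}''$ and $p_{t+1}''$ against both neighbors use $g_k>0,\ g_p<0$ and the same product estimate as above. The delicate point is that $k_{t+1}''=g(k_t'',p_t'')$ must be well-defined: writing the solvability condition for \eqref{eq:xeq} as $p<\bar p(k):=\lim_{R\to\infty}s(f(k)-kf'(k),R)$, which is increasing in $k$ since $s_w>0$ and the wage is increasing in $k$, admissibility of the upper orbit ($p_t'<\bar p(k_t')$) together with $k_t''>k_t'$ and $p_t''<p_t'$ gives $p_t''<\bar p(k_t')\le\bar p(k_t'')$, so $g$ exists and is positive. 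Convexity of $\cP_0\subseteq[0,\infty)$ then makes it an interval.

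I expect the interval step to be the main obstacle, precisely because it requires controlling the \emph{domain} of $g$ rather than only its monotonicity: one must verify that the feasibility region $\{(k,p):p<\bar p(k)\}$ is upward-closed in $k$ and downward-closed in $p$, which is where the Inada behavior $f'(0)=\infty$ (guaranteeing that $s(w(k),f'(x))\to\bar p(k)$ as $x\to 0^+$) and the savings monotonicities $s_w>0,\ s_R\ge 0$ enter. The pairwise comparison, by contrast, is a mechanical two-line induction once the signs $g_k>0$ and $g_p<0$ are in hand.
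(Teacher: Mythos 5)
Your proof is correct, and it follows essentially the same route as the argument the paper delegates to Tirole (1985, Lemmas 4, 6, 10) and Pham--Toda--Tirole (Proposition 2.2): a joint forward induction on $(k_t,p_t)$ using $g_k>0$, $g_p<0$, and $f''<0$, with the remaining inequalities ($R_t$, $w_t$, $v_t$, $b_t$) read off termwise and the interval property obtained by squeezing an intermediate orbit between two equilibrium orbits. Your explicit identification of the solvability threshold $\bar{p}(k)=\lim_{R\to\infty}s(f(k)-kf'(k),R)$ for \eqref{eq:xeq}, and the observation that the feasibility region is monotone in $(k,p)$, correctly handles the only delicate point, namely that the intermediate orbit remains well-defined.
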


The following uniqueness result plays a crucial role for constructing our counterexample. It states that if there exists an equilibrium with the long-run interest rate exceeding the population growth rate, then it is bubbleless, and there exist no other equilibria.

\begin{lem}[Unique, bubbleless equilibrium]\label{lem:impossible2}
Suppose Assumption \ref{asmp:U} holds. Let $\set{(k_t,p_t)}_{t=0}^\infty$ be an equilibrium and $\bar{k}\coloneqq \limsup_{t\to\infty}k_t$. If $f'(\bar{k})>G$, then the equilibrium is bubbleless and no other equilibrium (bubbly or bubbleless) exists.
\end{lem}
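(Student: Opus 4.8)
The plan is to exploit a tension between two facts: when the long-run interest rate exceeds the growth rate $G$, any bubble component must grow geometrically without bound, whereas the detrended asset price is bounded above by the wage. The two are incompatible unless the bubble vanishes identically, which forces the equilibrium to be bubbleless; uniqueness then follows from the monotonicity of Lemma \ref{lem:p0}.

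First I would record the law of motion of the bubble. Writing $B_t\coloneqq P_t-V_t$ and subtracting the fundamental-value recursion $V_t=R_{t+1}^{-1}(V_{t+1}+D_{t+1})$ from the no-arbitrage condition \eqref{eq:eq_noarbitrage} gives $B_t=R_{t+1}^{-1}B_{t+1}$, i.e.\ $B_{t+1}=f'(k_{t+1})B_t$. Since the partial sums of discounted dividends are dominated by $P_t$, one has $V_t\le P_t$, so the detrended bubble $b_t=B_t/G^t$ is nonnegative and satisfies $0\le b_t\le p_t$. Detrending the recursion yields $b_{t+1}=\frac{f'(k_{t+1})}{G}b_t$, hence $b_t=b_0\prod_{s=1}^t\frac{f'(k_s)}{G}$. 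The next step is to bound the capital path, which is where the assumption $f'(\infty)<1/G$ enters (``prevents diverging paths''). Asset market clearing \eqref{eq:eq_assetclear} in detrended form reads $Gk_{t+1}+p_t=s_t\le w_t=f(k_t)-k_tf'(k_t)$, so $k_{t+1}\le w(k_t)/G$ and $p_t\le w(k_t)$. Concavity gives $f(k)/k\to f'(\infty)$ and $f'(k)\to f'(\infty)$, whence $w(k)/k=f(k)/k-f'(k)\to 0$; thus $w(k)/(Gk)\to 0<1$ and the recursion $k_{t+1}\le w(k_t)/G$ forces $\set{k_t}$ to be bounded. Consequently $\bar k<\infty$, the wages $w_t=w(k_t)$ are bounded by some $\bar w$, and $b_t\le p_t\le w_t\le\bar w$ for all $t$.

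The main argument then runs as follows. Because $f'$ is continuous and strictly decreasing, $\liminf_t f'(k_t)=f'(\limsup_t k_t)=f'(\bar k)>G$, so there exist $\epsilon>0$ and $T$ with $f'(k_t)/G\ge 1+\epsilon$ for all $t\ge T$. If $b_0>0$, then $b_T>0$ and $b_t\ge b_T(1+\epsilon)^{t-T}\to\infty$, contradicting $b_t\le\bar w$. Hence $b_0=0$, so $b_t\equiv 0$ and the equilibrium is bubbleless. For uniqueness I would combine this with Lemma \ref{lem:p0}. Suppose a second equilibrium with initial price $\tilde p_0\neq p_0$ exists. If $\tilde p_0>p_0$, then $\tilde k_t<k_t$ for $t\ge1$, so $\limsup_t\tilde k_t\le\bar k$ and $f'(\limsup_t\tilde k_t)\ge f'(\bar k)>G$; the argument above forces the tilde equilibrium to be bubbleless, yet Lemma \ref{lem:p0} gives $\tilde b_t>b_t=0$, a contradiction. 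If $\tilde p_0<p_0$, Lemma \ref{lem:p0} gives $\tilde b_t<b_t=0$, contradicting $\tilde b_t\ge0$. Either way no second equilibrium exists.

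The step I expect to be the main obstacle is the boundedness of the capital path. Everything downstream (the bound $b_t\le\bar w$ that collides with geometric growth, and the inequality $f'(\limsup_t\tilde k_t)\ge f'(\bar k)$ used for uniqueness) rests on $\bar k<\infty$, so the limit computations $f(k)/k\to f'(\infty)$ and $w(k)/k\to0$ must be justified carefully from concavity and $f'(\infty)<1/G$; the remaining pieces are then routine once the uniform eventual bound $f'(k_t)/G\ge1+\epsilon$ is in hand.
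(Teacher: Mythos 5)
Your proof is correct and follows essentially the same route as the paper's: establish boundedness of $\set{(k_t,p_t)}$ from market clearing and concavity, use the eventual bound $R_t/G\ge 1+\epsilon$ implied by $f'(\bar k)>G$ to rule out a bubble, and invoke the monotonicity of Lemma \ref{lem:p0} for uniqueness. The only cosmetic difference is that you iterate the bubble recursion $b_{t+1}=(R_{t+1}/G)b_t$ forward to a contradiction with $b_t\le \bar w$, whereas the paper discounts the price backward and shows $P_t/(R_1\dotsb R_t)\to 0$; these are mirror images of the same argument.
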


\section{Counterexample to Proposition 1(c)}

Let $\phi$ be an arbitrary positive, increasing, and concave function, and set the production function to $f(k)=A\phi(k)$, where $A>0$ is productivity. The wage is a rescaled version of $\omega(k)\coloneqq \phi(k)-k\phi'(k)>0$. The concavity of $\phi$ requires $\omega$ to be increasing. For deriving our counterexample, it is convenient if $\omega(k)$ is close to linear around $k=0$. Finally, we would like $\omega$ to be simple enough so that we can solve for $\phi(k)=k\int (\omega(x)/x^2)\diff x$ in closed-form. Setting $\omega(k)=k/(1+k)$ achieves all these requirements. Thus, define
\begin{equation}
    \phi(k)\coloneqq k\int_k^\infty \frac{1}{x(1+x)}\diff x=k\log (1+1/k), \label{eq:h}
\end{equation}
whose graph is shown in Figure \ref{fig:phi}.

\begin{figure}[htb!]
\centering
\begin{tikzpicture}
  \begin{axis}[
      xmax=6, ymax=1,
      axis lines = left,
      xlabel = \(k\),
      scale = 1.5,
    ]
    \addplot[domain=0:6, samples=1000, smooth, thick] {x*ln(1+1/x};
  \end{axis}
\end{tikzpicture}
\caption{The graph of $\phi(k)=k\log(1+1/k)$.}\label{fig:phi}
\end{figure}

Note that
\begin{subequations}\label{eq:hderiv}
\begin{align}
\phi'(k)&=\log(1+1/k)-\frac{1}{1+k}, \label{eq:hderiv1}\\
\phi''(k)&=\frac{1}{1+k}-\frac{1}{k}+\frac{1}{(1+k)^2}=-\frac{1}{k(1+k)^2}<0, \label{eq:hderiv2}\\
\phi(k)-k\phi'(k)&=\frac{k}{1+k}, \label{eq:hderiv3}
\end{align}
\end{subequations}
$\phi'(0)=\infty$, $\phi'(\infty)=\log 1=0$, and hence $\phi'(k)>0$ for $k<\infty$.

Since $f(k)=A\phi(k)$, by \eqref{eq:hderiv3} we obtain the wage
\begin{equation}
    f(k)-kf'(k)=A\frac{k}{1+k}. \label{eq:wage_h}
\end{equation}
For any utility function, since the savings function necessarily satisfies $s(w,R)\le w$, by Lemma \ref{lem:system} the equilibrium system satisfies
\begin{subequations}\label{eq:system1}
\begin{align}
    Gk_{t+1}+p_t&\le A\frac{k_t}{1+k_t}, \label{eq:system1_k}\\
    p_{t+1}+d_{t+1}&=\frac{f'(k_{t+1})}{G}p_t. \label{eq:system1_p}
\end{align}
\end{subequations}

The following lemma shows that, independent of preferences, if dividends grow at least geometrically fast, then equilibrium detrended capital converges to zero if initial capital is small enough.

\begin{lem}[Resource curse]\label{lem:curse}
Consider the production function $f(k)=A\phi(k)$ given by \eqref{eq:h}. Suppose dividends satisfy $D_t\ge DG_d^t$, where $D>0$ and $G_d\in (0,G)$. Let $r\coloneqq G_d/G\in (0,1)$ and $x_t=Cr^t/t>0$ for $t\ge 1$, where $C>0$. Then there exists $n\in \N$ such that if $k_0\le x_n$ and $\set{(k_t,p_t)}$ satisfies \eqref{eq:system1}, then $k_t\le x_{t+n}$ for all $t$ and $\lim_{t\to\infty}(k_t,p_t)=(0,0)$.
\end{lem}

Lemma \ref{lem:curse} is an example of the ``resource curse''.\footnote{To the best of our knowledge, \citet[Example 1]{BosiHa-HuyLeVanPhamPham2018} provide the first example of an equilibrium (in a model similar to that of \citet{Tirole1985} but with altruism and non-stationary dividends) where  $\lim_{t\to\infty}(k_t,p_t)=(0,0)$. They refer to this situation as the ``resource curse''. However, we can verify that this example satisfies $R>G$ and hence is not a counterexample to Proposition 1(c) of \citet{Tirole1985}.} Let us explain the intuition. By Lemma \ref{lem:system} and $s(w,R)\le w$, we obtain $Gk_{t+1}+p_t\le \omega(k_t)$, where $\omega(k)\coloneqq f(k)-kf'(k)>0$. Dividing both sides by $G>0$, using the no-arbitrage condition \eqref{eq:system_p}, and using $p_{t+1}\ge 0$, we obtain
\begin{equation}
    k_{t+1}+\frac{d_{t+1}}{f'(k_{t+1})}\le \frac{\omega(k_t)}{G}. \label{eq:kineq}
\end{equation}
Since $f'>0$, $f''<0$, and $f'(0)=\infty$, the left-hand side of \eqref{eq:kineq} is strictly increasing in $k_{t+1}$ and maps $(0,\infty)$ to $(0,\infty)$. Hence we may apply the implicit function theorem and rewrite \eqref{eq:kineq} as $k_{t+1}\le \psi(k_t,d_{t+1})$, where $\psi(k,d)$ is increasing in $k$ and decreasing in $d$. If $k_t$ is small, so is $k_{t+1}$ as long as $d_{t+1}$ is not too small. Hence, we may sustain an equilibrium in which $\set{k_t}$ converges to 0, which is the resource curse. We can now construct a counterexample to Proposition 1(c).

\setcounter{prop}{2}
\begin{prop}[Counterexample to \citet{Tirole1985}, Proposition 1(c)]\label{prop:C}
Suppose dividends satisfy $D_t\ge DG_d^t$, where $D>0$ and $G_d\in (0,G)$. Consider the logarithmic utility \eqref{eq:utility_log} and the production function $f(k)=A\phi(k)$ given by \eqref{eq:h}, where
\begin{equation}
    A\ge \max\set{2G/\beta,(G/\beta)^2/G_d}. \label{eq:AB}
\end{equation}
Then the following statements are true.
\begin{enumerate}
    \item\label{item:c1} The economy without the asset has a unique steady state $k^*=\beta A/G-1> 0$, which has steady-state interest rate $f'(k^*)<G_d$.
    \item\label{item:c2} There exists $\kappa>0$ such that, if $k_0<\kappa$, then the economy has a unique equilibrium $\set{(k_t,p_t)}_{t=0}^\infty$, which is bubbleless and converges to $(0,0)$.
\end{enumerate}
\end{prop}

In \citet{Tirole1985}, dividends are $D_t=D$ (constant) and $G>1$, so his model satisfies the assumptions of Proposition \ref{prop:C} with $G_d=1$ (along with all other assumptions on the utility function, production function, steady state, etc.). The reader may wonder where Tirole's proof went wrong. In \citet{Tirole1985}, the possibility of a bubbleless equilibrium with $R<G$ is considered at the bottom of p.~1522, where he states ``Let us now show that if $\bar{r}<0$ [corresponding to $R<G_d<G$], there exists no [bubbleless] equilibrium''. Here, Tirole states ``Let us consider the three mutually
exhaustive cases'', which are (in our notation)
\begin{enumerate*}
    \item\label{item:Rdec1} $R_t<R_{t-1}$ and $R_t<G$ for some $t$,
    \item\label{item:Rdec2} $R_t<R_{t-1}$ for some $t$, and $R_t\ge G$ for any such $t$, and
    \item\label{item:Rinc} $R_t\ge R_{t-1}$ for all $t$.
\end{enumerate*}
However, in each case, Tirole reasons that if the asset price converges to 0, the interest rate must converge to the bubbleless interest rate. This reasoning is incorrect, as our counterexample satisfies $p_t\to 0$ yet $R_t=f'(k_t)\to \infty$ (because $k_t\to 0$).

\section{Restoring Proposition 1(c)}

Since a counterexample exists, \citet{Tirole1985}'s original claim in Proposition 1(c) cannot be true without additional assumptions. Notice that to construct an equilibrium with $k_t\to 0$ (``resource curse''), Lemma \ref{lem:curse} requires the initial capital to be sufficiently small. We can thus conjecture that if initial capital is sufficiently large, the conclusion of Proposition 1(c) may be true. In this section, we show that this is indeed the case, provided that dividends are sufficiently small. To this end, we introduce an additional assumption.

\begin{asmp}[Bubbly steady state]\label{asmp:g}
Let $g$ be as in Lemma \ref{lem:system}. There exist $k^*,p^*>0$ such that $k^*=g(k^*,p^*)$ and $f'(k^*)=G$.
\end{asmp}

Assumption \ref{asmp:g} merely implies that $(k^*,p^*)$ is a bubbly steady state. Note that $k^*$ is unique because $f''<0$. Then $p^*$ is also unique because $g$ is strictly decreasing in $p$ by Lemma \ref{lem:system}. Furthermore, let
\begin{equation}
    \cK^*\coloneqq \set{k>0:k=g(k,0)} \label{eq:cK}
\end{equation}
be the set of steady-state capital without the asset (which could be empty). The following theorem shows that if 
\begin{enumerate*}
    \item the bubbleless interest rate is less than the dividend growth rate,
    \item initial capital is sufficiently large (not too small relative to the bubbly steady state value), and
    \item dividends are sufficiently small, then there exists a unique equilibrium, which is asymptotically bubbly.
\end{enumerate*}
Thus, we restore Proposition 1(c) of \citet{Tirole1985}.

\begin{thm}[Bubble necessity with small dividends]\label{thm:restore}
Suppose Assumptions \ref{asmp:U}, \ref{asmp:g} hold, $G_d\coloneqq \limsup_{t\to \infty}D_t^{1/t}\in (0,G)$, and $f'(k)<G_d$ for all $k\in \cK^*$ in \eqref{eq:cK}. Let $d_t\coloneqq D_t/G^t$ be the detrended dividend. Then there exist $\kappa\in (0,k^*)$ and $\delta>0$ such that, if $k_0\ge \kappa$ and $\sup_{t\ge 1} d_t\le \delta$, then there exists a unique equilibrium $\set{(k_t,p_t)}$, which is asymptotically bubbly and converges to $(k^*,p^*)$.
\end{thm}

\citet[p.~1502]{Tirole1985} assumes $\cK^*$ in \eqref{eq:cK} is a singleton, which he refers to as ``Diamond's stability assumption''; we do not require it. The condition $f'(k)<G_d<G$ for all $k\in \cK^*$ corresponds to the ``bubble necessity condition'' $R<G_d<G$ in \citet{HiranoToda2025JPE}. The key to rectifying Proposition 1(c) is to choose initial capital not too small and dividends not too large. 

The following example illustrates the importance of the initial condition.

\begin{exmp}[Importance of initial condition]\label{exmp:initial}
Consider the economy in Proposition \ref{prop:C} with $D_t=DG_d^t$, where $D>0$. Then Assumption \ref{asmp:U} and the conditions on dividends hold. The steady state condition $k=g(k,p)$ is equivalent to
\begin{equation}
    Gk+p=\beta A\frac{k}{1+k}\iff G+\frac{p}{k}=\frac{\beta A}{1+k}. \label{eq:ss}
\end{equation}
Let $k_f^*,k_b^*$ be the fundamental and bubbly steady-state capital. By Proposition \ref{prop:C}, we have $k_f^*=\beta A/G-1>0$ and $f'(k_f^*)<G_d$. Let $k_b^*<k_f^*$ be such that $f'(k_b^*)=G$. Since the right-hand side of \eqref{eq:ss} is decreasing in $k$, we must have $p/k>0$ at $k=k_b^*$ and Assumption \ref{asmp:g} holds. Since the set $\cK^*=\{k_f^*\}$ is a singleton, by Theorem 3 of \citet{PhamTodaWP}, there exists a unique equilibrium, and either $k_t\to 0$ or $k_t\to k_b^*$.\footnote{Note that Lemma \ref{lem:unique2} only claims that there exists a unique equilibrium and either $p_t\to 0$ or $p_t\to p^*$, which is not enough for this argument.} Let $\cK_0\coloneqq \set{k_0>0:k_t\to k_b^*}$ and $\kappa\coloneqq \inf \cK_0$. By the definition of $\cK_0$ and Theorem \ref{thm:restore}, if $D>0$ is small enough, it must be $\kappa<k_b^*$ and $(\kappa,\infty)\subset \cK_0\subset [\kappa,\infty)$. By Proposition \ref{prop:C}, $(0,\infty)\backslash \cK_0$ is nonempty, so it must be $\kappa>0$. Therefore, $k_t\to 0$ if $k_0<\kappa$ and $k_t\to k_b^*$ if $k_0>\kappa$.
\end{exmp}

The following example shows that the assumption of sufficiently small dividends in Theorem \ref{thm:restore} is essential.

\begin{exmp}[Large dividends imply resource curse]\label{exmp:D0}
Consider the same economy as Example \ref{exmp:initial}. For any $k_0>0$ and $G_d\in (0,G)$, let $r\coloneqq G_d/G\in (0,1)$. Choose $D>0$ large enough such that \eqref{eq:Dineq} holds for all $m$, where we set $n=1$ and $C=k_0/r$. Then by Lemma \ref{lem:curse}, we have $k_t\le k_0r^t/(t+1)$ for all $t$, so $k_t\to 0$.
\end{exmp}

Finally, the following theorem shows that, under additional Inada-type conditions, the conclusion of Theorem \ref{thm:restore} holds for arbitrary dividends.

\begin{thm}[Bubble necessity with arbitrary dividends]\label{thm:restore2}
Let everything be as in Theorem \ref{thm:restore}. Suppose that $k\mapsto f(k)-kf'(k)$ has range $(0,\infty)$ and for any fixed $c^o>0$, we have
\begin{equation}
    \lim_{c^y\to \infty}\frac{U_1}{U_2}(c^y,c^o)=0. \label{eq:sufficient}
\end{equation}
Then there exists $\kappa>0$ such that for all $k_0\ge \kappa$, the conclusion of Theorem \ref{thm:restore} holds.
\end{thm}

Proposition \ref{prop:C} and Theorem \ref{thm:restore} show that resource curse and bubble necessity are both theoretically possible. Each case seems to be empirically relevant based on the anecdotal evidence of the Spanish colonization of the Americas cited in the introduction and the connection between unbalanced growth (caused by technological innovation) and asset price bubbles \citep{HiranoToda2025PNAS}.

\appendix

\section{Proofs}\label{sec:proof}

\subsection{Proof of Lemma \ref{lem:system}}

Let $\Phi(x,k,p)$ be the left-hand side of \eqref{eq:xeq}. Then
\begin{align*}
    \Phi_x&=G-s_Rf''(x)>0,& \Phi_k&=s_wkf''(k)<0, & \Phi_p&=1.
\end{align*}
Therefore, $x=g(k,p)$ is unique (if it exists). By the implicit function theorem, we have
\begin{align*}
    g_k&=-\frac{\Phi_k}{\Phi_x}=-\frac{s_wkf''(k)}{G-s_Rf''(x)}>0,\\
    g_p&=-\frac{\Phi_p}{\Phi_x}=-\frac{1}{G-s_Rf''(x)}<0.
\end{align*}
\eqref{eq:system_k} follows from dividing the asset market clearing condition \eqref{eq:eq_assetclear} by $N_t=G^t$ and noting that it is equivalent to setting $(k,p,x)=(k_t,p_t,k_{t+1})$ in \eqref{eq:xeq}. \eqref{eq:system_p} follows by dividing the no-arbitrage condition \eqref{eq:eq_noarbitrage} by $G^t$ and rearranging. \hfill \qedsymbol

\subsection{Proof of Lemma \ref{lem:impossible2}}

Take any equilibrium. We first show $\set{(k_t,p_t)}$ is uniformly bounded. Dividing \eqref{eq:eq_assetclear} by $N_t=G^t$ and noting $P_t\ge 0$ and $s_t\le w_t\le f(k_t)$, we obtain $Gk_{t+1}\le f(k_t)$. Since $F$ is neoclassical, $f(k)=F(k,1)$ is concave. Since $f'(\infty)<G$, we can take constants $a\in (0,1)$ and $b\ge 0$ such that $f(k)/G\le ak+b$ for all $k>0$. Iterating $0\le k_{t+1}\le f(k_t)/G\le ak_t+b$ yields
\begin{equation*}
    k_t\le a^t\left(k_0-\frac{b}{1-a}\right)+\frac{b}{1-a}.
\end{equation*}
Letting $t\to\infty$, we obtain $\limsup_{t\to\infty}k_t\le b/(1-a)$, so $\set{k_t}$ is uniformly bounded. Similarly, \eqref{eq:eq_assetclear} yields $p_t\le s_t\le w_t\le f(k_t)$, so $\set{p_t}$ is uniformly bounded.

Take $p>0$ such that $p_t\le p$ for all $t$. Since $\bar{k}=\limsup_{t\to\infty}k_t$ and $f'(\bar{k})>G$, we can take $\epsilon>0$ and $T>0$ such that $f'(\bar{k}+\epsilon)>G$ and $k_t<\bar{k}+\epsilon$ for all $t\ge T$. Let $R_t=f'(k_t)$. Then for $t>T$, we have
\begin{equation}
    \frac{P_t}{R_1\dotsb R_t}\le \frac{pG^t}{R_1\dotsb R_T f(\bar{k}+\epsilon)^{t-T}}=\frac{pG^T}{R_1\dotsb R_T}\left(\frac{G}{f'(\bar{k}+\epsilon)}\right)^{t-T}\to 0 \label{eq:nobubble_ub}
\end{equation}
as $t\to\infty$, so there is no bubble. (See Equation (5) of \citet{HiranoToda2025JPE}.) Suppose there exists another equilibrium $\set{(k_t',p_t')}_{t=0}^\infty$. Let $b_t,b_t'\ge 0$ be the bubble components of these equilibria. Since $\set{(k_t,p_t)}_{t=0}^\infty$ is bubbleless, we have $b_t=0$. By Lemma \ref{lem:p0}, $0\le b_t'\neq b_t=0$ implies $b_t'>b_t$ and hence $R_t'>R_t$. By the same derivation as \eqref{eq:nobubble_ub}, it follows that $\set{(k_t',p_t')}_{t=0}^\infty$ is bubbleless, which contradicts $b_t'>0$. Therefore, the equilibrium is unique, and it is bubbleless. \hfill \qedsymbol

\subsection{Proof of Lemma \ref{lem:curse}}

Let $d_t\coloneqq D_t/G^t$ be the detrended dividend, which satisfies $d_t\ge Dr^t$. We seek to prove the claim by induction. By \eqref{eq:system1_p} and $f(k)=A\phi(k)$, we obtain
\begin{equation}
    \frac{p_t}{G}=\frac{p_{t+1}+d_{t+1}}{f'(k_{t+1})}>\frac{d_{t+1}}{f'(k_{t+1})}=\frac{d_{t+1}}{A\phi'(k_{t+1})}. \label{eq:pt_ub}
\end{equation}
By \eqref{eq:system1_k} and \eqref{eq:pt_ub}, if $k_t\le x_{t+n}$, then
\begin{equation}
    k_{t+1}+\frac{d_{t+1}}{A\phi'(k_{t+1})}<\frac{A}{G}\frac{k_t}{1+k_t}\le \frac{A}{G}\frac{x_{t+n}}{1+x_{t+n}}. \label{eq:ineq1}
\end{equation}
Noting that $x\mapsto x+d_{t+1}/(A\phi'(x))$ is strictly increasing (because $\phi''<0$), if we can show
\begin{equation}
    \frac{A}{G}\frac{x_{t+n}}{1+x_{t+n}}\le x_{t+n+1}+\frac{d_{t+1}}{A\phi'(x_{t+n+1})}, \label{eq:ineq2}
\end{equation}
then $k_{t+1}\le x_{t+n+1}$ follows from \eqref{eq:ineq1} and \eqref{eq:ineq2}. Therefore, it suffices to show \eqref{eq:ineq2}. But noting that $x_{t+n}>0$ and $d_{t+1}\ge Dr^{t+1}$, it suffices to show
\begin{equation}
    \frac{A}{G}x_{t+n}\le x_{t+n+1}+\frac{Dr^{t+1}}{A\phi'(x_{t+n+1})}. \label{eq:ineq3}
\end{equation}
Now set $x_t=Cr^t/t$, where $C>0$. Using \eqref{eq:hderiv1}, \eqref{eq:ineq3} is equivalent to
\begin{equation}
    \frac{A}{G}\frac{Cr^{t+n}}{t+n}\le \frac{Cr^{t+n+1}}{t+n+1}+\frac{D}{A}\frac{r^{t+1}}{\log\left(1+\frac{t+n+1}{Cr^{t+n+1}}\right)-\frac{1}{1+\tfrac{Cr^{t+n+1}}{t+n+1}}}. \label{eq:ineq4}
\end{equation}
Setting $m=t+n+1\ge 2$ and noting that $\phi'>0$, \eqref{eq:ineq4} is equivalent to
\begin{equation}
    \frac{D}{r^n}\ge AC\left(\frac{A}{Gr}\frac{m}{m-1}-1\right)\left(\frac{\log(1+m/(Cr^m))}{m}-\frac{1}{m+Cr^m}\right)\eqqcolon E_m. \label{eq:Dineq}
\end{equation}
A straightforward calculation shows
\begin{equation*}
    \lim_{m\to\infty}E_m=AC\left(\frac{A}{Gr}-1\right)(-\log r),
\end{equation*}
which is finite. Since $D>0$ and $r\in (0,1)$, we can take $n\in \N$ large enough such that $D/r^n\ge \sup_{m\ge 2} E_m$. Then \eqref{eq:Dineq} holds for all $m\ge 2$.

Let $k_0\le x_n$ and $\set{(k_t,p_t)}$ satisfy the equilibrium system \eqref{eq:system1}. Let us prove by induction that $k_t\le x_{t+n}$ for all $t$. The claim holds for $t=0$ by assumption. Suppose the claim holds for some $t$ and consider $t+1$. Since \eqref{eq:Dineq} holds for all $m\ge 2$, so does \eqref{eq:ineq4} for all $t\ge 0$. Hence \eqref{eq:ineq3} holds, which implies $k_{t+1}\le x_{t+n+1}$. Since $x_t=Cr^t/t\to 0$, we have $k_t\to 0$. Then \eqref{eq:system1_k} implies $p_t\to 0$. \hfill \qedsymbol

\subsection{Proof of Proposition \ref{prop:C}}

We need the following lemma to prove Proposition \ref{prop:C}.

\begin{lem}\label{lem:log}
For $0<z\le 1/2$, we have $\log(1-z)>-z-z^2$.
\end{lem}

\begin{proof}
Let $f(z)=\log(1-z)+z+z^2$. Then
\begin{align*}
    f'(z)&=-\frac{1}{1-z}+1+2z, & f''(z)&=-\frac{1}{(1-z)^2}+2.
\end{align*}
Hence $f''(z)>0$ for $z<a\coloneqq 1-1/\sqrt{2}$ and $f''(z)\le 0$ for $z\in [a,1)$. The strict convexity of $f$ for $z<a$ and $f(0)=f'(0)=0$ imply $f(z)>0$ for $z\in (0,a]$. The concavity of $f$ for $z\in [a,1)$ and $f(a)>0$, $f(1/2)=-\log 2+3/4>0$ imply $f(z)>0$ for $z\in [a,1/2]$.
\end{proof}

\begin{proof}[Proof of Proposition \ref{prop:C}]
\ref{item:c1} Solving $k=g(k,0)$ in \eqref{eq:g_log} and using \eqref{eq:wage_h}, we obtain the steady-state capital without the asset
\begin{equation*}
    k=\frac{\beta A}{G}\frac{k}{1+k}\iff k^*=\frac{\beta A}{G}-1> 0,
\end{equation*}
where we use $A\ge 2G/\beta$ in \eqref{eq:AB}. Using \eqref{eq:hderiv1}, the steady-state interest rate is
\begin{equation}
    R^*\coloneqq f'(k^*)=A\phi'(k^*)=-A\log\left(1-\frac{G}{\beta A}\right)-\frac{G}{\beta}. \label{eq:R*}
\end{equation}
Setting $z=G/(\beta A)\le 1/2$ in \eqref{eq:R*} and applying Lemma \ref{lem:log}, we obtain
\begin{equation*}
    0<R^*=\frac{G}{\beta}\left(-\frac{\log(1-z)}{z}-1\right)<\frac{G}{\beta}z=\left(\frac{G}{\beta}\right)^2\frac{1}{A}\le G_d,
\end{equation*}
where the last inequality follows from $A\ge (G/\beta)^2/G_d$ in \eqref{eq:AB}.

\medskip
\noindent
\ref{item:c2} By Theorem 1 of \citet{PhamTodaWP}, an equilibrium $\set{(k_t,p_t)}$ exists. Let $r\coloneqq G_d/G\in (0,1)$, $x_t\coloneqq Cr^t/t>0$ for $t\ge 1$ and $C>0$, and choose $n\in \N$ as in Lemma \ref{lem:curse}. If $k_0\le \kappa \coloneqq x_n$, then $(k_t,p_t)\to (0,0)$ by Lemma \ref{lem:curse}. Since $f'(0)=\infty>G$, by Lemma \ref{lem:impossible2}, the equilibrium is unique and bubbleless.
\end{proof}

\subsection{Proof of Theorem \ref{thm:restore}}

We need several lemmas to prove Theorem \ref{thm:restore}. In what follows, we always assume $G_d\coloneqq \limsup_{t\to\infty}D_t^{1/t}<G$.

\begin{lem}[Long-run behavior of equilibrium]\label{lem:longrun}
If Assumption \ref{asmp:U} holds, in any equilibrium, one of the following statements is true.
\begin{enumerate}[(a)]
    \item\label{item:lr_bubbleless} The equilibrium is bubbleless, $\lim_{t\to\infty}p_t=0$, and $R_t>G$ for sufficiently large $t$.
    \item\label{item:lr_asymbubbleless} The equilibrium is asymptotically bubbleless and $\set{(k_t,p_t,R_t)}$ converges to $(k,0,R)$ satisfying $k=g(k,0)$ and $R=f'(k)\in [G_d,G]$.
    \item\label{item:lr_bubbly} The equilibrium is asymptotically bubbly and $\set{(k_t,p_t,R_t)}$ converges to $(k,p,G)$ satisfying $k=g(k,p)$, $p>0$, and $G=f'(k)$.
\end{enumerate}
\end{lem}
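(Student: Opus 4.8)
The plan is to reduce the statement to the long-run behavior of the planar system \eqref{eq:system}, exploiting that it is a bounded, asymptotically autonomous, \emph{monotone} (competitive) dynamical system, and then to read off the three cases from the limit of $R_t=f'(k_t)$ relative to $G$ and $G_d$. First I would record that every equilibrium orbit is uniformly bounded: this is exactly the estimate in the proof of Lemma \ref{lem:impossible2}, giving $\limsup_t k_t<\infty$ and $0\le p_t\le f(k_t)$. Since $G_d<G$ forces $d_t=D_t/G^t\to 0$ (indeed $\limsup_t d_t^{1/t}=G_d/G<1$), the map driving \eqref{eq:system} is asymptotically autonomous with limiting map $T_\infty(k,p)=\bigl(g(k,p),\,\tfrac{1}{G}f'(g(k,p))\,p\bigr)$. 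The key structural observation is that $T_\infty$ is order-preserving for the competitive order $(k,p)\preceq(k',p')\iff k\le k'$ and $p\ge p'$: using $g_k>0$, $g_p<0$ (Lemma \ref{lem:system}) and $f''<0$, the first component of $T_\infty$ is increasing in $k$ and decreasing in $p$, while the second is decreasing in $k$ and increasing in $p$. Let $\bar k\coloneqq\limsup_t k_t$ and let $k^*$ solve $f'(k^*)=G$.

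Case \ref{item:lr_bubbleless} I would dispatch directly, without convergence. If $f'(\bar k)>G$ (equivalently $\bar k<k^*$, which includes the degenerate $k_t\to 0$), then Lemma \ref{lem:impossible2} shows the equilibrium is bubbleless, so $p_t=v_t$. Choosing $\epsilon,T$ with $f'(\bar k+\epsilon)>G$ and $k_t<\bar k+\epsilon$ for $t\ge T$ gives $R_t>G$ for all large $t$. Then in the fundamental value
\[
v_t=\sum_{s\ge 1}d_{t+s}\frac{G^s}{R_{t+1}\dotsb R_{t+s}}
\]
each discount factor is bounded by a fixed geometric term $q^s$ with $q<1$, while $d_{t+s}\to 0$; hence $p_t=v_t\to 0$. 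This is case \ref{item:lr_bubbleless}.

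In the complementary case $f'(\bar k)\le G$, i.e. $\bar k\ge k^*$, I would invoke convergence. Because \eqref{eq:system} is a bounded, asymptotically autonomous, planar monotone system, I claim the orbit converges to a fixed point of $T_\infty$, which is necessarily an interior steady state since $\bar k\ge k^*>0$. The fixed points of $T_\infty$ are precisely the steady states: either $p=0$ with $k=g(k,0)$ (so $k\in\cK$), or $f'(k)=G$ with $k=g(k,p)$, which by Assumption \ref{asmp:g} and $g_p<0$ forces $(k,p)=(k^*,p^*)$. Establishing this convergence — ruling out nontrivial recurrence and $2$-cycles for the discrete monotone map and controlling the non-autonomous forcing $d_t$ — is the step I expect to be the main obstacle; here I would lean on \citet[Theorem 3]{PhamTodaTirole}, or else on an eventual-monotonicity argument for $\set{(k_t,p_t)}$ in the order $\preceq$.

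Granting convergence to a steady state $(k,p)$ with $R\coloneqq f'(k)\le G$, the classification follows. If $p>0$ then $f'(k)=G$, so $(k,p)=(k^*,p^*)$; since the detrended bubble obeys $b_{t+1}=(R_{t+1}/G)b_t$ (from \eqref{eq:eq_noarbitrage} and the definition of $V_t$) and $R\to G>G_d$, the fundamental value satisfies $v_t\to 0$ while $p_t\to p^*>0$, so $b_t=p_t-v_t\to p^*>0$ and the equilibrium is asymptotically bubbly — case \ref{item:lr_bubbly}. If instead $p=0$, then $p_t\to 0$, hence $0\le b_t\le p_t\to 0$ (asymptotically bubbleless) and $R=f'(k)\le G$; finally $R\ge G_d$, for otherwise $R<G_d$ would make the summands of $v_t$ blow up along a subsequence (using $\limsup_t d_t^{1/t}=G_d/G$), contradicting $v_t\le p_t$ bounded. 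This is case \ref{item:lr_asymbubbleless}. The three cases are exhaustive and mutually exclusive by construction.
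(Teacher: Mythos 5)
The paper does not actually prove this lemma: it defers entirely to Lemmas 2 and 3 of \citet{Tirole1985} and Proposition 3.3 of \citet{PhamTodaTirole}. Measured against that benchmark, your endpoint analysis is correct and in places cleaner than the original. In particular, dispatching case \ref{item:lr_bubbleless} directly from Lemma \ref{lem:impossible2} (bubbleless, hence $p_t=v_t$, hence $p_t\to 0$ because $R_t\ge f'(\bar k+\epsilon)>G$ eventually while $d_t\to0$) is a nice shortcut, and the reading-off of cases \ref{item:lr_asymbubbleless} and \ref{item:lr_bubbly} from the limit point — the bubble recursion $b_{t+1}=(R_{t+1}/G)b_t$, the estimate $v_t\to0$ when $R>G_d$, and the exclusion of $R<G_d$ via divergence of the fundamental value — is all sound. (One small remark: you should not invoke Assumption \ref{asmp:g} here, since the lemma is stated under Assumption \ref{asmp:U} alone; you only need that a fixed point of $T_\infty$ with $p>0$ forces $f'(k)=G$, which follows from the fixed-point equation itself.)

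The genuine gap is the one you flag yourself, and it is not a technicality — it is the entire mathematical content of the lemma: proving that when $f'(\bar k)\le G$ the bounded orbit $\set{(k_t,p_t)}$ converges to a fixed point of $T_\infty$. The appeal to ``bounded, asymptotically autonomous, planar monotone system $\Rightarrow$ convergence'' is not an off-the-shelf theorem. For the autonomous limit map one can get convergence of bounded orbits from eventual componentwise monotonicity of planar order-preserving maps, but with the non-autonomous forcing $d_t$ the general theory only delivers that the $\omega$-limit set is a chain-transitive invariant set of $T_\infty$, which does not immediately reduce to a fixed point; one must also rule out $2$-cycles and handle the boundary of the domain where $g$ is defined and $p_t\ge0$. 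The actual proofs in \citet{Tirole1985} (his Lemmas 2--3, via the three ``mutually exhaustive cases'' on the monotonicity of $R_t$ relative to $R_{t-1}$ and $G$) and in \citet[Proposition 3.3]{PhamTodaTirole} carry out exactly this eventual-monotonicity case analysis by hand. Since you ultimately propose to cite that same reference for the missing step, your proposal lands at essentially the same level of completeness as the paper's omitted proof; as a self-contained argument, however, the convergence step remains unproven and cannot be closed by the cited general principle alone.
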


\begin{proof}
We omit the proof as it is essentially the same as Lemmas 2 and 3 of \citet{Tirole1985}. See \citet[Proposition 3.3]{PhamTodaWP}.
\end{proof}

The following lemma is a straightforward consequence of Lemmas \ref{lem:p0} and \ref{lem:longrun}.

\begin{lem}[Uniqueness of bubbleless and asymptotically bubbly equilibria]\label{lem:unique1}
If Assumption \ref{asmp:U} holds, bubbleless and asymptotically bubbly equilibria are unique.
\end{lem}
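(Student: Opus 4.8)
The plan is to treat the two uniqueness claims separately, in each case assuming for contradiction two distinct equilibria with initial prices $p_0<p_0'$ in $\cP_0$ and exploiting the strict monotonicity of Lemma~\ref{lem:p0}, which for all $t\ge 1$ gives $R_t<R_t'$ and, for the detrended bubble components, $b_t<b_t'$. For the bubbleless case this is immediate: if both equilibria are bubbleless then $b_t=b_t'=0$ for all $t$, directly contradicting $b_1<b_1'$. Hence at most one bubbleless equilibrium exists, which is the easy half.

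For the asymptotically bubbly case, first note that asymptotic bubbliness forces both equilibria into case~\ref{item:lr_bubbly} of Lemma~\ref{lem:longrun}, so $(k_t,p_t,R_t)$ and $(k_t',p_t',R_t')$ both converge to the \emph{same} bubbly steady state $(k^*,p^*,G)$: here $k^*$ uniquely solves $f'(k^*)=G$ because $f''<0$, and then $p^*$ is pinned down by $k^*=g(k^*,p^*)$ because $g$ is strictly decreasing in $p$ by Lemma~\ref{lem:system}. The key observation is that the detrended bubble obeys the homogeneous no-arbitrage recursion $b_{t+1}=(R_{t+1}/G)\,b_t$ (and likewise for the primed path, with $b_t,b_t'>0$ throughout since $P_t>V_t$), so the ratio $r_t\coloneqq b_t/b_t'$ satisfies
\[
  \frac{r_{t+1}}{r_t}=\frac{R_{t+1}}{R_{t+1}'}<1
\]
by the strict interest-rate comparison in Lemma~\ref{lem:p0}. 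Thus $r_t$ is strictly decreasing, and since $b_1<b_1'$ already gives $r_1<1$, we obtain $\lim_{t}r_t\le r_1<1$. To reach a contradiction it then suffices to prove $r_t\to 1$, after which $p_0=p_0'$ follows and uniqueness is established.

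To establish $r_t\to 1$ I would show the detrended fundamental values vanish, $v_t\to 0$ and $v_t'\to 0$: since $R_t\to G>G_d$ and the detrended dividends decay (as $d_t^{1/t}\to G_d/G<1$), a tail estimate of $v_t=\sum_{s\ge 1}d_{t+s}\prod_{j=1}^{s}(G/R_{t+j})$ shows that the geometric decay of dividends dominates the subgeometric growth of the discount product, giving $v_t\to 0$. Combined with $p_t,p_t'\to p^*>0$ and $b_t=p_t-v_t$, this yields $b_t,b_t'\to p^*$ and hence $r_t\to 1$, contradicting $\lim_t r_t<1$. I expect this tail estimate for $v_t\to 0$ to be the main obstacle, since it requires controlling the discount factors $G/R_{t+j}$ uniformly along the path; the convergence $R_t\to G$ from Lemma~\ref{lem:longrun} together with the strict gap $G_d<G$ are precisely what make it go through, and everything else is an immediate consequence of Lemmas~\ref{lem:p0} and~\ref{lem:longrun}.
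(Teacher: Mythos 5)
Your proof is correct, and for the asymptotically bubbly half it takes a genuinely different route from the paper's. The bubbleless half is the same (two bubbleless equilibria would force $0=b_1<b_1'=0$ via Lemma~\ref{lem:p0}). For the bubbly half, the paper likewise sends both equilibria to the common limit $(k^*,p^*,G)$ via Lemma~\ref{lem:longrun}, but then works with the \emph{price} ratio: using \eqref{eq:system_p} and the elementary inequality $(a-c)/(b-c)\ge a/b$ for $a\ge b>c\ge 0$ (with $a=(R_t'/G)p_{t-1}'$, $b=(R_t/G)p_{t-1}$, $c=d_t$), it shows $p_t'/p_t$ is strictly increasing, hence its limit is at least $p_0'/p_0>1$, contradicting $p_t'/p_t\to p^*/p^*=1$. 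You instead pass to the bubble ratio $r_t=b_t/b_t'$, which obeys the exact homogeneous recursion $b_{t+1}=(R_{t+1}/G)b_t$, so the strict monotonicity $r_{t+1}<r_t$ is immediate with no algebraic handling of the dividend term; the price you pay is the additional step $v_t\to 0$, which the paper's argument avoids entirely. That tail estimate does go through as you sketch: pick $\epsilon>0$ and $\delta>1$ with $\delta(G_d/G+\epsilon)<1$; then $R_t\to G$ gives $\prod_{j=1}^{s}(G/R_{t+j})\le\delta^{s}$ for all large $t$, while $d_{t+s}\le(G_d/G+\epsilon)^{t+s}$ eventually, so $v_t\le(G_d/G+\epsilon)^{t}\sum_{s\ge1}\bigl(\delta(G_d/G+\epsilon)\bigr)^{s}\to0$; this uses the appendix's standing assumption $G_d<G$. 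Both arguments are comparable in length; the paper's is more self-contained since it never touches the fundamental value, while yours isolates more transparently where the contradiction lies (the bubble ratio cannot be strictly decreasing from $r_1<1$ and also converge to $1$).
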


\begin{proof}
If $p_0<p_0'$ are two bubbleless equilibria, by Lemma \ref{lem:p0}, the bubble components satisfy $0=b_0<b_0'=0$, which is a contradiction.

If $p_0<p_0'$ are two asymptotically bubbly equilibria, by Lemma \ref{lem:p0}, we have $k_t>k_t'$, $0<p_t<p_t'$, and $0<R_t<R_t'$ for all $t\ge 1$. By Lemma \ref{lem:longrun}, $\set{(k_t,p_t,R_t)}$ and $\set{(k_t',p_t',R_t')}$ converge to $(k^*,p^*,G)$. Therefore, $\lim_{t\to\infty}p_t'/p_t=p^*/p^*=1$. However, $0<p_t<p_t'$, $0<R_t<R_t'$, and \eqref{eq:system_p} imply
\begin{equation*}
    \frac{p_t'}{p_t}=\frac{(R_t'/G)p_{t-1}'-d_t}{(R_t/G)p_{t-1}-d_t}\ge \frac{(R_t'/G)p_{t-1}'}{(R_t/G)p_{t-1}}>\frac{p_{t-1}'}{p_{t-1}},
\end{equation*}
so by induction $p_t'/p_t>\dots>p_0'/p_0>1$. Therefore, $\lim_{t\to\infty}p_t'/p_t\ge p_0'/p_0>1$, which is a contradiction.
\end{proof}

The following lemma establishes the uniqueness of equilibrium.

\begin{lem}[Uniqueness of equilibrium]\label{lem:unique2}
If Assumption \ref{asmp:U} holds and $f'(k)<G_d$ for all $k\in \cK^*$ in \eqref{eq:cK}, then there exists a unique equilibrium, which takes the form of either \ref{item:lr_bubbleless} or \ref{item:lr_bubbly} in Lemma \ref{lem:longrun}.
\end{lem}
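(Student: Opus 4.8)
The plan is to assemble three facts already established: the long-run trichotomy of Lemma~\ref{lem:longrun}, the per-type uniqueness of Lemma~\ref{lem:unique1}, and the statement of Lemma~\ref{lem:p0} that the set $\cP_0$ of equilibrium initial prices is an interval. The hypothesis $f'(k)<G_d$ for all $k\in\cK$ enters at a single point, to exclude the intermediate case~\ref{item:lr_asymbubbleless} of Lemma~\ref{lem:longrun}; once that case is removed, a counting argument forces $\cP_0$ to be a single point.

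First I would rule out case~\ref{item:lr_asymbubbleless}. If some equilibrium were asymptotically bubbleless, then by Lemma~\ref{lem:longrun}\ref{item:lr_asymbubbleless} it would converge to a triple $(k,0,R)$ with $k=g(k,0)$ and $R=f'(k)\in[G_d,G]$. But $k=g(k,0)$ says exactly that $k\in\cK$ by the definition~\eqref{eq:cK}, so the hypothesis gives $f'(k)<G_d$, contradicting $R=f'(k)\ge G_d$. Hence no equilibrium is of type~\ref{item:lr_asymbubbleless}, and by Lemma~\ref{lem:longrun} every equilibrium is either bubbleless (case~\ref{item:lr_bubbleless}) or asymptotically bubbly (case~\ref{item:lr_bubbly}).

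For uniqueness I would argue by cardinality. By Lemma~\ref{lem:unique1} there is at most one bubbleless and at most one asymptotically bubbly equilibrium, so the previous step leaves at most two equilibria in total. On the other hand, Lemma~\ref{lem:p0} says $\cP_0$ is an interval, and any real interval containing two distinct points contains uncountably many. These are compatible only if $\cP_0$ is empty or a singleton, so $\cP_0$ is a singleton and the equilibrium is unique; by the first step it is of type~\ref{item:lr_bubbleless} or~\ref{item:lr_bubbly}.

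Existence is inherited from the nonemptiness of $\cP_0$ in Lemma~\ref{lem:p0}, equivalently from the fact that running the forward system~\eqref{eq:system} with the detrended price set to the fundamental value produces a bubbleless equilibrium, well defined under Assumption~\ref{asmp:U}. I expect the only step requiring genuine care to be the elimination of case~\ref{item:lr_asymbubbleless}: the contradiction hinges on the limiting interest rate lying in the \emph{closed} interval $[G_d,G]$, so it is precisely the lower endpoint $G_d$---the quantity the hypothesis controls---that delivers the contradiction, and one should confirm that the limit capital $k$ genuinely satisfies $k=g(k,0)$ and hence lies in $\cK$. Everything else is bookkeeping on top of Lemmas~\ref{lem:p0},~\ref{lem:longrun}, and~\ref{lem:unique1}.
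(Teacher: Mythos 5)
Your elimination of case~\ref{item:lr_asymbubbleless} and your uniqueness argument are essentially the paper's own proof: the hypothesis $f'(k)<G_d$ on $\cK$ kills the asymptotically bubbleless case via Lemma~\ref{lem:longrun}\ref{item:lr_asymbubbleless}, and then Lemma~\ref{lem:unique1} caps the number of equilibria at two while Lemma~\ref{lem:p0} (interval structure of $\cP_0$) forces $\cP_0$ to be a singleton. The paper phrases the last step by showing $\cP_0$ would have to equal the two-point set $\set{p_0,p_0'}$, which cannot be an interval; your cardinality version is the same argument.

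The one genuine gap is existence. Lemma~\ref{lem:p0} asserts only that $\cP_0$ is an interval, which is vacuously true when $\cP_0=\emptyset$; it does not give nonemptiness. Your fallback construction---``running the forward system with the detrended price set to the fundamental value''---is circular: the fundamental value $V_t=\sum_{s\ge 1}D_{t+s}/(R_{t+1}\dotsb R_{t+s})$ is defined in terms of the equilibrium interest rates $R_{t+s}=f'(k_{t+s})$, which themselves depend on the entire price path you are trying to construct, so there is no well-defined initial condition to feed into \eqref{eq:system}. Existence is genuinely nontrivial here (one must find a $p_0$ for which the forward dynamics keep $p_t\ge 0$ and $k_t>0$ forever), and the paper handles it by invoking Theorem 1 of \citet{PhamTodaTirole}. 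You need either to cite such an external existence result or to supply a separate argument (e.g., a connectedness/intermediate-value argument over candidate $p_0$); as written, the existence half of the lemma is not established.
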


\begin{proof}
By Theorem 1 of \citet{PhamTodaWP}, there exists an equilibrium. Note that Lemma \ref{lem:longrun} covers all cases regarding the behavior of $\set{R_t}$. Since $f'(k)<G_d$ for all $k\in \cK^*$, case \ref{item:lr_asymbubbleless} in Lemma \ref{lem:longrun} cannot occur. Hence, every equilibrium is either bubbleless or asymptotically bubbly.

To show equilibrium uniqueness, suppose $p_0,p_0'\in \cP_0$ and $p_0<p_0'$. By Lemma \ref{lem:p0}, $p_0'$ is bubbly, so it is asymptotically bubbly. It is also unique by Lemma \ref{lem:unique1}. Hence $p_0$ must be bubbleless, which is also unique by Lemma \ref{lem:unique1}. Therefore, the equilibrium set is the two-point set $\cP_0=\set{p_0,p_0'}$, which is a contradiction because Lemma \ref{lem:p0} implies that $\cP_0$ is an interval.
\end{proof}

The following lemma shows that, once we have an equilibrium converging to the bubbly steady state, increasing initial capital retains this property.

\begin{lem}\label{lem:k0}
Let everything be as in Lemma \ref{lem:unique2} and suppose an equilibrium $\set{(k_t,p_t)}$ of the form of Lemma \ref{lem:longrun}\ref{item:lr_bubbly} exists. If $k_0'>k_0$, the corresponding (unique) equilibrium $\set{(k_t',p_t')}$ is also of the form of Lemma \ref{lem:longrun}\ref{item:lr_bubbly}.
\end{lem}

\begin{proof}
By Lemma \ref{lem:unique2}, a unique equilibrium exists, which takes the form of either \ref{item:lr_bubbleless} or \ref{item:lr_bubbly} in Lemma \ref{lem:longrun}. Let $\set{(k_t',p_t')}$ be the corresponding equilibrium path.

We claim $p_0'>p_0$. Suppose to the contrary that $p_0'\le p_0$. As in Lemma \ref{lem:p0}, we can easily show $k_t'>k_t$, $p_t'<p_t$, and $R_t'<R_t$ for all $t\ge 1$. If the equilibrium is of the form of Lemma \ref{lem:longrun}\ref{item:lr_bubbleless}, then $p_t'\to 0$ and for large enough $t$ we have $G<R_t'<R_t\to G$, so $R_t'\to G$ and $k_t'\to k^*$. This forces $p_t'\to p^*$, which contradicts $p_t'\to 0$. Therefore the equilibrium is of the form of Lemma \ref{lem:longrun}\ref{item:lr_bubbly}, and $(k_t',p_t',R_t')\to (k^*,p^*,G)$. But then \eqref{eq:system_p} implies
\begin{equation*}
    \frac{p_t'}{p_t}=\frac{(R_t'/G)p_{t-1}'-d_t}{(R_t/G)p_{t-1}-d_t}\le \frac{(R_t'/G)p_{t-1}'}{(R_t/G)p_{t-1}}<\frac{p_{t-1}'}{p_{t-1}},
\end{equation*}
so by induction $p_t'/p_t<\dots<p_1'/p_1<1$. Therefore, $\lim_{t\to\infty}p_t'/p_t\le p_1'/p_1<1$, which contradicts $p_t'/p_t\to p^*/p^*=1$. Thus, $p_0'>p_0$.

Finally, we claim that $k_t'>k_t$ and $p_t'>p_t$ for all $t$, which implies that $\liminf_{t\to\infty}p_t'\ge \lim_{t\to\infty}p_t=p^*>0$ and hence the equilibrium $\set{(k_t',p_t')}$ takes the form of Lemma \ref{lem:longrun}\ref{item:lr_bubbly}. The claim holds for $t=0$. Suppose it holds until some $t$, and consider $t+1$. If $k_{t+1}'\le k_{t+1}$, we have $R_{t+1}'\ge R_{t+1}$. Using \eqref{eq:system_p} and $p_t'>p_t$, we obtain
\begin{equation*}
    \frac{p_{t+1}'}{p_{t+1}}=\frac{(R_{t+1}'/G)p_{t}'-d_{t+1}}{(R_{t+1}/G)p_{t}-d_{t+1}}\ge \frac{(R_{t+1}'/G)p_{t}'}{(R_{t+1}/G)p_{t}}\ge \frac{p_t'}{p_t}>1,
\end{equation*}
so $p_{t+1}'>p_{t+1}$. Then, we have $k'_{t+2}=g(k'_{t+1},p'_{t+1})\le g(k_{t+1},p_{t+1})=k_{t+2}$. By induction, we get $k_{t+s}'\le k_{t+s}$ and $p_{t+s}'/p_{t+s}>\dots >p_t'/p_t>1$ for all $s\ge 1$. Then $\liminf_{t\to\infty}p_t'>\lim_{t\to\infty}p_t=p^*$, which contradicts the fact that either $p_t'\to 0$ or $p_t'\to p^*$. Therefore, we have $k_{t+1}'> k_{t+1}$. Then, by using the same argument as the proof of $p_0'>p_0$, we have $p_{t+1}'>p_{t+1}$, and by induction, the claim is true for all $t$.
\end{proof}

The following lemma allows us to apply the implicit function theorem.

\begin{lem}\label{lem:lindiff}
Let $A$ be a real $2\times 2$ matrix with two real eigenvalues $\lambda_1,\lambda_2$ satisfying $\abs{\lambda_1}<1<\abs{\lambda_2}$; $\set{u_t}_{t=0}^\infty$ a bounded sequence in $\R^2$; $b=(b_1,b_2)\neq 0$ a row vector; and $c\in \R$. Then the system of equations
\begin{equation}
    x_{t+1}=Ax_t+u_t \label{eq:lindiff}
\end{equation}
with the initial condition $bx_0=c$ has a unique bounded solution $\set{x_t}_{t=0}^\infty$ in $\R^2$ if and only if the first entry of the row vector $bP$ is nonzero ($(bP)_1\neq 0$), where $P$ is the real invertible matrix that diagonalizes $A$:
\begin{equation}
    P^{-1}AP=\begin{bmatrix}
        \lambda_1 & 0 \\
        0 & \lambda_2
    \end{bmatrix}. \label{eq:diagonal}
\end{equation}
\end{lem}
\begin{proof}
Multiplying $P^{-1}$ from left to \eqref{eq:lindiff}, we obtain
\begin{equation}
    P^{-1}x_{t+1}=(P^{-1}AP)P^{-1}x_t+P^{-1}u_t. \label{eq:lindiff1}
\end{equation}
Letting $y_t=P^{-1}x_t$, $v_t=P^{-1}u_t$, and writing \eqref{eq:lindiff1} entry-wise, we obtain
\begin{subequations}
\begin{align}
    y_{1,t+1}&=\lambda_1y_{1,t}+v_{1,t}, \label{eq:lindiff1a}\\
    y_{2,t+1}&=\lambda_2y_{2,t}+v_{2,t}. \label{eq:lindiff1b}
\end{align}
\end{subequations}
If $\set{x_t},\set{u_t}$ are bounded, so are $\set{y_t},\set{v_t}$. Noting that $\abs{\lambda_2}>1$ and solving \eqref{eq:lindiff1b} forward, we can uniquely determine $\set{y_{2,t}}$ as
\begin{equation*}
    y_{2,t}=-\sum_{s=0}^\infty \lambda_2^{-s-1}v_{2,t+s},
\end{equation*}
which is bounded. Noting that $\abs{\lambda_1}<1$ and solving \eqref{eq:lindiff1a} backward, we can uniquely determine $\set{y_{1,t}}$ as a function of $y_{1,0}$,
\begin{equation*}
    y_{1,t}=\lambda_1^ty_{1,0}+\sum_{s=1}^t \lambda_1^{s-1} v_{t-s},
\end{equation*}
which is bounded. To determine $y_{1,0}$, we use the initial condition
\begin{equation*}
    c=bx_0=bPy_0=(bP)_1y_{1,0}+(bP)_2y_{2,0}.
\end{equation*}
Since $y_{2,0}$ is determined, $y_{1,0}$ is uniquely determined if and only if $(bP)_1\neq 0$.
\end{proof}

\begin{proof}[Proof of Theorem \ref{thm:restore}]
By Lemma \ref{lem:unique2}, there exists a unique equilibrium, which takes the form of either \ref{item:lr_bubbleless} or \ref{item:lr_bubbly} in Lemma \ref{lem:longrun}. If we can show that an asymptotically bubbly equilibrium exists if $k_0>0$ is sufficiently close to $k^*$, then by Lemma \ref{lem:k0} the claim holds for all $k_0>\kappa$ for some $\kappa\in (0,k^*)$. Therefore, it suffices to show the existence of an equilibrium of the form of Lemma \ref{lem:longrun}\ref{item:lr_bubbly} when $k_0>0$ is sufficiently close to $k^*$ and detrended dividends $\set{d_t}$ are sufficiently small.

We prove this claim by applying the implicit function theorem. The proof uses functional analysis and we refer the reader to \citet{Luenberger1969}. When $k_0=k^*$ and $d_t=0$ for all $t$ (stationary pure bubble model), such an equilibrium trivially exists, namely $(k_t,p_t)=(k^*,p^*)$ for all $t$. Now consider the case with general $k_0$ and $\set{d_t}$. By Lemma \ref{lem:system}, the equilibrium system is described by \eqref{eq:system}. By Assumption, we have $\limsup_{t\to\infty} d_t^{1/t}<1$, so in particular $d_t\to 0$ and $\set{d_t}$ is bounded. Let $x_0=k_0$, $x_t=d_t$ for $t\ge 1$, and $x=(x_t)\in \ell^\infty\eqqcolon X$, where $\ell^\infty$ denotes the Banach space of real bounded sequences equipped with the supremum norm $\norm{\cdot}$. Let $y_t=(k_{t+1},p_t)$ and $y=(y_t)\in (\ell^\infty)^2\eqqcolon Y$, which is also a Banach space with the supremum norm. We say $y$ is positive and write $y>0$ if $k_{t+1}>0$ and $p_t>0$ for all $t$. Let $Z\coloneqq Y=(\ell^\infty)^2$. Define the operator $\Phi:X\times Y\to Z$ by
\begin{equation}
    \Phi(x,y)=(\Phi_0(x,y),\dots,\Phi_t(x,y),\dotsc), \label{eq:Phi}
\end{equation}
where we restrict $y>0$ and
\begin{equation*}
    \Phi_t(x,y)=\begin{bmatrix}
        k_{t+1}-g(k_t,p_t) \\
        p_{t+1}-\frac{f'(k_{t+1})}{G}p_t+d_{t+1}
    \end{bmatrix}.
\end{equation*}
Then $\Phi$ is continuously Fr\'echet differentiable.  Letting $D_y\Phi$ denote the Fr\'echet derivative with respect to $y$, we may view $D_y\Phi$ as a block matrix whose $(t,j)$ block is
\begin{equation}
    D_{y_j}\Phi_t(x,y)=\begin{cases*}
        \begin{bmatrix}
            -g_k(k_t,p_t) & 0\\
            0 & 0
        \end{bmatrix} & if $j=t-1$,\\
        \begin{bmatrix}
            1 & -g_p(k_t,p_t) \\
            -\frac{f''(k_{t+1})}{G}p_t & -\frac{f'(k_{t+1})}{G}
        \end{bmatrix} & if $j=t$,\\
        \begin{bmatrix}
            0 & 0\\
            0 & 1
        \end{bmatrix} & if $j=t+1$,\\
        0 & otherwise.
    \end{cases*}\label{eq:DyPhi}
\end{equation}

To apply the implicit function theorem, let $x^*,y^*$ be the $x,y$ corresponding to the steady state, namely $x^*=(k^*,0,0,\dotsc)$ and $y^*=\set{(k^*,p^*)}$. We evaluate $D_y\Phi$ at $(x^*,y^*)$. Since the entries of \eqref{eq:DyPhi} are constant at $(x^*,y^*)$, clearly $D_y\Phi(x^*,y^*):Y\to Z$ is a bounded linear operator. Let us show that $D_y\Phi(x^*,y^*)$ is bijective. To this end, consider the equation $z=D_y\Phi(x^*,y^*)h$, where $z=(z_t)$, $z_t=(z_{1,t},z_{2,t})$, and similarly for $h$. Decomposing the equation into blocks using \eqref{eq:DyPhi}, we obtain
\begin{align*}
    z_0&=D_{y_0}\Phi_0h_0+D_{y_1}\Phi_0h_1, \\
    (\forall t\ge 1)~z_t&=D_{y_{t-1}}\Phi_th_{t-1}+D_{y_t}\Phi_th_t+D_{y_{t+1}}\Phi_th_{t+1},
\end{align*}
where all $\Phi_t$'s are evaluated at $(x^*,y^*)$. Writing down the entries yields
\begin{subequations}\label{eq:zt}
\begin{align}
    z_{1,t}&=-g_k(k^*,p^*)h_{1,t-1}+h_{1,t}-g_p(k^*,p^*)h_{2,t}, \label{eq:zt1}\\
    z_{2,t}&=-\frac{f''(k^*)}{G}p^*h_{1,t}-\frac{f'(k^*)}{G}h_{2,t}+h_{2,t+1} \label{eq:zt2}
\end{align}
\end{subequations}
for $t\ge 0$ with the initial condition $h_{1,-1}=0$. Letting $w_t\coloneqq (h_{1,t-1},h_{2,t})$, we may rewrite \eqref{eq:zt} as
\begin{align}
z_t&=Lw_{t+1}-Mw_t, &
    L&\coloneqq \begin{bmatrix}
        1 & 0\\
        -f''p^*/G & 1
    \end{bmatrix}, & M&\coloneqq \begin{bmatrix}
        g_k & g_p\\
        0 & 1
    \end{bmatrix},
    \label{eq:wt1}
\end{align}
where all functions are evaluated at $(k^*,p^*)$ and we have used $f'(k^*)=G$. Since $L$ is invertible, we may rewrite \eqref{eq:wt1} as
\begin{equation}
    w_{t+1}=L^{-1}Mw_t+L^{-1}z_t\eqqcolon Aw_t+u_t.\label{eq:wt2}
\end{equation}

Let us verify that the system \eqref{eq:wt2} with the initial condition $w_{1,0}=0$ satisfies the assumptions of Lemma \ref{lem:lindiff}. We check the assumptions one by one.
\begin{itemize}
\item Using \eqref{eq:wt1}, the matrix $A$ in \eqref{eq:wt2} simplifies to
\begin{equation*}
    A\coloneqq L^{-1}M=
    \begin{bmatrix}
        1 & 0\\
        f''p^*/G & 1
    \end{bmatrix}\begin{bmatrix}
        g_k & g_p\\
        0 & 1
        
    \end{bmatrix}=\begin{bmatrix}
        g_k & g_p\\
        f''p^*g_k/G & f''p^*g_p/G+1
    \end{bmatrix}.
\end{equation*}
The characteristic function of $A$ is
\begin{equation*}
    q(\lambda)\coloneqq \lambda^2-(g_k+f''p^*g_p/G+1)\lambda+g_k.
\end{equation*}
By Lemma \ref{lem:system}, we have $q(0)=g_k>0$ and $q(1)=-f''p^*g_p/G<0$. Therefore, $A$ has two real eigenvalues $\lambda_1,\lambda_2$ satisfying $0<\lambda_1<1<\lambda_2$.
\item Since $\set{z_t}$ is a bounded sequence in $\R^2$ and $u_t=L^{-1}z_t$, so is $\set{u_t}$.
\item The initial value $w_0$ satisfies $w_{1,0}=h_{1,-1}=0$, which corresponds to setting $b=(1,0)$ and $c=0$ in Lemma \ref{lem:lindiff}.
\item We show $(bP)_1\neq 0$, where $P=(p_{ij})$ is the matrix that diagonalizes $A$ as in \eqref{eq:diagonal} and $p_{ij}$ is its $(i,j)$ entry. Suppose  $(bP)_1=0$. Since
\begin{equation*}
    bP=\begin{bmatrix}
        1 & 0
    \end{bmatrix}
    \begin{bmatrix}
        p_{11} & p_{12} \\
        p_{21} & p_{22}
    \end{bmatrix}
    =\begin{bmatrix}
        p_{11} & p_{12}
    \end{bmatrix},
\end{equation*}
we obtain $p_{11}=0$. Since $P$ is invertible, we have $p_{21}\neq 0$. By rescaling $P$ if necessary, we may assume $p_{21}=1$. Multiplying $P$ from the left to \eqref{eq:diagonal} and comparing the first column, we obtain
\begin{equation*}
    \lambda_1\begin{bmatrix}
        0 \\ 1
    \end{bmatrix}=A\begin{bmatrix}
        0 \\ 1
    \end{bmatrix}=\begin{bmatrix}
        g_p\\
        f''p^*g_p/G+1
    \end{bmatrix},
\end{equation*}
which contradicts $g_p<0$. Therefore, $(bP)_1\neq 0$.
\end{itemize}
By Lemma \ref{lem:lindiff}, there exists a unique bounded sequence $\set{w_t}$ in $\R^2$ satisfying \eqref{eq:wt2} with the initial condition $w_{1,0}=0$, so $D_y\Phi(x^*,y^*)$ is bijective.

Since $\Phi$ in \eqref{eq:Phi} is continuously Fr\'echet differentiable and $D_y\Phi(x^*,y^*)$ is invertible, by the implicit function theorem for Banach spaces (see Problem 2 in \citet[p.~266]{Luenberger1969} and \citet[Theorem 3.4.10]{KrantzParks2003}), there exist a constant $\delta>0$ and a continuous mapping $\phi:B_\delta(x^*)\to Y$ (where $B_\delta(x^*)\coloneqq \set{x\in X:\norm{x-x^*}\le \delta}$ is the $\delta$-ball) with $\phi(x^*)=y^*$ such that, for all $x\in B_\delta(x^*)$, we have $\Phi(x,y)=0$ if $y=\phi(x)$. Since $x=(x_t)$, $x_0=k_0$, and $x_t=d_t$ for $t\ge 1$, if $\abs{k_0-k^*}\le \delta$ and $\sup_{t\ge 1} d_t\le \delta$, then there exists a bounded sequence $y=\set{(k_{t+1},p_t)}$ such that the equilibrium conditions \eqref{eq:system} hold. Continuity of $\phi$ implies that $\set{(k_t,p_t)}$ is close to $\set{(k^*,p^*)}$, so we have $k_t>0$ and $p_t>0$. Thus, $\set{(k_t,p_t)}$ is an equilibrium. Since $\set{(k_t,p_t)}$ is close to $\set{(k^*,p^*)}$, it cannot be of the form of Lemma \ref{lem:longrun}\ref{item:lr_bubbleless}. Therefore, it must be of the form of Lemma \ref{lem:longrun}\ref{item:lr_bubbly}.
\end{proof}

\subsection{Proof of Theorem \ref{thm:restore2}}
Take $\delta>0$ as in Theorem \ref{thm:restore}. Since $\limsup_{t\to\infty}d_t^{1/t}<1$, we have $d_t\to 0$. Hence there exists $T\in \N$ such that $\sup_{t\ge T}d_t\le \delta$. By Theorem \ref{thm:restore}, for sufficiently large $k_T>0$, there exists a unique and asymptotically bubbly equilibrium $\set{(k_t,p_t)}_{t=T}^\infty$ starting at $T$ with initial capital $k_T$. For $t=T,\dots,1$, recursively define $(k_{t-1},p_{t-1})$ using Lemma \ref{lem:system}, or equivalently
\begin{subequations}\label{eq:recursive}
\begin{align}
    p_{t-1}&=\frac{G}{f'(k_t)}(p_t+d_t), \label{eq:recursive_p}\\
    s(\omega(k_{t-1}),f'(k_t))&=Gk_t+p_{t-1}, \label{eq:recursive_k}
\end{align}
\end{subequations}
where $s$ is the savings function and $\omega(k)\coloneqq f(k)-kf'(k)$. Clearly, such $\set{(k_t,p_t)}_{t=0}^{T-1}$ exist if the function $\psi(k)\coloneqq s(\omega(k),R)$ has range $(0,\infty)$ for any fixed $R>0$. Under this condition, by Lemmas \ref{lem:system} and \ref{lem:k0}, if we define $\kappa=k_0$, then the conclusion of Theorem \ref{thm:restore} holds for any $k_0\ge \kappa$, which proves Theorem \ref{thm:restore2}.

It thus remains to show that $\psi$ has range $(0,\infty)$. Since $\omega'(k)=-kf''(k)>0$ and Assumption \ref{asmp:U} holds, $\psi$ is continuous and strictly increasing. Hence it suffices to show $\psi(0)=0$ and $\psi(\infty)=\infty$. Using the trivial bound $0\le s(w,R)\le w$, we obtain $0\le \psi(k)\le \omega(k)$. Since $\omega$ has range $(0,\infty)$, we have $\omega(0)=0$ and hence $\psi(0)=0$. To show $\psi(\infty)=\infty$, since $\omega(\infty)=\infty$, it suffices to show $s(\infty,R)=\infty$. Taking the first-order condition of \eqref{eq:eq_utility}, we have
\begin{equation}
    R=\frac{U_1}{U_2}(w-s(w,R),Rs(w,R)). \label{eq:foc}
\end{equation}
If $s(\infty,R)\eqqcolon \bar{s}<\infty$, letting $w\to\infty$ in \eqref{eq:foc}, we obtain $R=(U_1/U_2)(\infty,R\bar{s})=0$ by \eqref{eq:sufficient}, which is a contradiction. Therefore, $s(\infty,R)=\infty$. \hfill \qedsymbol

\printbibliography

\newpage

\begin{center}
{\LARGE\bf Online Appendix (Not for publication)}
\end{center}

\begin{refsection}
\section{Systematic literature search}

We conducted a systematic literature search to identify bibliographic items related to Proposition 1 of \citet{Tirole1985}.

\subsection{Data collection}

On May 14, 2025, Toda's research assistant Johar Cassa (PhD student at Emory University) used the software \emph{Publish or Perish}\footnote{\url{https://harzing.com/resources/publish-or-perish}} to create a list of bibliographic items citing \citet{Tirole1985}. This resulted in 1{,}943 items, which is very close to the Google Scholar citation counts on the same day (1{,}964). We used \emph{Publish or Perish} because it made it easier to retrieve information such as publication year, author names, titles, publishers, URLs, etc.

We focused on items written in English, resulting in 1{,}592 items. The justification is that, as Proposition 1 of \citet{Tirole1985} is technical, if there is something scientifically significant related to it, the item is likely written in English.

Among the remaining 1{,}592 items, we checked 1{,}435 (90.1\%). The reasons we were unable to check some items include the deletion of old working papers, publication in obscure outlets that we do not have access to (typically books and book chapters), among others.

For each of the remaining items, we skimmed the text and assigned the dummy variable \texttt{Proposition1}, which takes the value 1 if the item discusses anything remotely related to Proposition 1 of \citet{Tirole1985} with a dividend-paying asset (either statement (a), (b), or (c)), even if the item does not explicitly mention Proposition 1. Among the 1{,}435 items we checked, 47 (3.3\%) had $\texttt{Proposition1}=1$. Our spreadsheet is available on Toda's website.\footnote{\url{https://alexisakira.github.io/files/tirole_citations.xlsx}}

\subsection{Evaluation}

We carefully read each item with $\texttt{Proposition1}=1$ and evaluated how Proposition 1 is (explicitly or implicitly) discussed. Below are our findings, where we list items (in chronological and then alphabetical order).

\begin{itemize}
    \item \citet[Footnote 9]{DavidsonMartin1991} cite Proposition 1 of \citet{Tirole1985} without a specific discussion.
    \item \citet{Rhee1991} considers an extension of the \citet{Tirole1985} model where land (a durable non-reproducible asset) enters the production function. \citet[p.~794]{Rhee1991} states ``In proving the possibility of dynamic inefficiency, this example generalizes Tirole's (1985) analysis of deterministic bubbles on assets yielding constant rents. [\ldots] additional restrictions are needed to obtain the uniqueness of a non-steady-state equilibrium path.'' In this model, land rent (marginal productivity of land) is endogenous, but \citet[Assumption A]{Rhee1991} directly imposes a high-level assumption. Under this assumption, his Proposition 2 discusses the long-run behavior of equilibrium. Proposition 2 of \citet{Rhee1991} closely parallels Proposition 1 of \citet{Tirole1985}, but we note the following two important points. First, \citet[Proposition 2]{Rhee1991} has only parts (a), (b), which correspond to parts (a), (b) of Proposition 1 of \citet{Tirole1985}. Second, the proof simply states ``The proof is basically equivalent to the proof of Proposition 1 in Tirole (1985). Assumption A is needed for the proof of Lemma 1 in his Appendix.'' without providing any details. We thus conclude that \citet{Rhee1991} does not refer to Proposition 1(c) and does not dispute the analysis of \citet{Tirole1985}.
    \item \citet*[Footnote 12]{DavidsonMartinMatusz1994} cite Proposition 1 of \citet{Tirole1985} without a specific discussion.
    \item \citet[p.~351]{Burke1996} refers to \citet[\S7]{Wilson1981} and \citet[Proposition 1(c)]{Tirole1985} as examples where asset prices necessarily include bubbles.
    \item \citet[Footnote 4]{Femminis2002} states ``most of Tirole's analysis is carried out assuming that the aggregate quantity of rent is exogenously fixed in terms of output'', referring to the analysis with a dividend-paying asset. However, there is no discussion of Proposition 1.
    \item \citet[Footnote 22]{Lauri2004} recognizes that \citet{Tirole1985} considered dividend-paying assets, though there is no discussion of Proposition 1.
    \item \citet[p.~184]{Binswanger2005} states ``Tirole (1985) shows that the results derived for bubbles on intrinsically useless assets can be generalized to assets paying a dividend as long as dividends grow at a slower rate than the economy'', referring to the analysis with a dividend-paying asset. Furthermore, \citet[p.~192]{Binswanger2005} states ``Proposition 1 can be compared to the conditions for the existence of bubbles in deterministic economies derived in Tirole (1985, p.~1504)'', which refers to Proposition 1 of \citet{Tirole1985}.
    \item \citet[Table 3.1]{Siwasarit2006} summarizes Proposition 1 of \citet{Tirole1985} nearly verbatim, but there is no discussion beyond that.
    \item \citet[p.~70]{BosiSeegmuller2013} state ``Tirole (1985) proves the existence of a unique and monotonic growth path which converges to a bubbly steady state'', which refers to Proposition 1(b) (as they consider pure bubbles).
    \item Several papers by Bosi, Le Van, Pham, and coauthors extensively discuss \citet{Tirole1985}.
    \begin{itemize}
        \item \citet*[p.~215]{BosiLeVanPham2016Bookchapter} study an infinite-horizon model with a dividend-paying asset and state ``As long as dividends tend to zero, the land price remains higher than this fundamental value'', without mentioning Proposition 1. 
        \item \citet{BosiPham2016} refer to Proposition 1 of \citet{Tirole1985} but only for the case without dividends (pure bubble).
        \item \citet*{BosiHa-HuyLeVanPhamPham2018} consider \citet{Tirole1985}'s model with altruism and positive dividends and characterize the long-run behavior in Proposition 2 (which corresponds to \citet[Proposition 1]{Tirole1985}). However, their Assumption 6 is a high-level assumption that is often violated in some settings. After their Proposition 3, they state ``It should be noticed that Tirole (1985) does not consider the case where $\liminf_{t\to\infty}k_t$ may be zero. However, this case may be possible.'' Indeed, Example 1 of \citet{BosiHa-HuyLeVanPhamPham2018} provides an example where $k_t$ converges to zero. 
        \item \citet{BosiLeVanPham2022} consider a model with infinitely-lived agents and a dividend-paying asset and construct an example of a continuum of bubbly equilibria in Proposition 7, which relates to Proposition 1(b) of \citet{Tirole1985}.
    \end{itemize}
    \item \citet*[Footnote 8]{AllenBarlevyGale2017} state ``Tirole (1985) showed that if dividends are positive and the limiting interest rate without a bubble is nonpositive, the equilibrium is unique and features a bubble'', which obviously refers to Proposition 1(c) of \citet{Tirole1985}. Furthermore, \citet[\S2]{AllenBarlevyGale2017} present an OLG model with risk-neutral agents and a dividend-paying asset in which the unique equilibrium is bubbly, which is essentially the same as the example in \citet[\S7]{Wilson1981}. (A significantly revised version of \citet{AllenBarlevyGale2017} was published as \citet{AllenBarlevyGale2025}.)
    \item \citet[Footnote 29]{BassettoCui2018} state ``Adapting Proposition 1 in Tirole (1985), one can then prove that there exists a unique equilibrium, even though the interest rate is asymptotically negative'', which likely refers to Proposition 1(c). In the 2013 working paper version, this footnote is labeled 23 and the authors thank Gadi Barlevy for pointing this out.
    \item \citet[Footnote 6]{MartinVentura2018} state ``we have known since the work of Tirole (1985) that there exist environments in which the unique rational market psychology must feature a bubble'', referring to \citet{AllenBarlevyGale2017}.
    \item \citet[p.~205]{Sorger2019} refers to Proposition 1 of \citet{Tirole1985}. However, as he deals with an intrinsically worthless asset, the reference is obviously to Proposition 1(a)(b) and not (c).
    \item \citet[Footnote 4]{Galichere2022} states ``Deterministic bubble assets with positive fundamentals need to satisfy either one of both following conditions to exist: i) the total rent grows at a slower rate than the economy [\ldots]'', referring to the analysis with a dividend-paying asset. However, there is no discussion of Proposition 1.
    \item \citet*[p.~11]{MichauOnoSchlegl2023} refer to Proposition 1 of \citet{Tirole1985} in connection to their Proposition 1, whose bullet points correspond to statements (a), (b), (c), but there is no discussion.
    \item \citet[Footnote 16]{Plantin2023} states ``It would be straightforward to add a ``tree'' to which bubbles are attached, as in Tirole (1985)'', referring to the analysis with a dividend-paying asset. However, there is no discussion of Proposition 1.
    \item Several papers (published and unpublished) by Hirano and Toda point out some issues with \citet[Proposition 1]{Tirole1985}.
    \begin{itemize}
        \item The review article of \citet[\S5.2]{HiranoToda2024JME} states ``Proposition 1(c) of Tirole (1985) recognizes the possibility that bubbles are necessary for equilibrium existence if the interest rate without bubbles is negative. Although he gives some explanations on p.~1506 in the sentence starting with ``The intuition behind this fact roughly runs as follows'', he did not necessarily provide a formal proof. In Tirole (1985), the proof of the nonexistence of fundamental equilibria appears at the bottom of p.~1522 and the top of p.~1523. The proof uses a convergence result discussed in Lemma 2. However, this convergence heavily relies on the monotonicity condition on the function $\psi$ defined in Equation (7) on p.~1502. This monotonicity/stability condition is a high-level assumption that need not be satisfied in a general setting.'' In hindsight, this monotonicity/stability condition is not an issue; see \citet[\S C.1]{PhamTodaWP}, especially p.~48.
        \item \citet[Footnote 4]{HiranoToda2025JPE} refers to \citet[\S5.2]{HiranoToda2024JME} with the same logic. Theorem 3 of \citet[\S V.A]{HiranoToda2025JPE} proves the necessity of bubbles in Tirole's model (which corresponds to his Proposition 1(c)) under Cobb-Douglas utility and other assumptions. One of their key assumptions is that $\liminf_{t\to\infty}K_t>0$ (capital is bounded away from zero), which is an assumption on an endogenous object and hence does not resolve the issue.
        \item \citet[Footnote \P]{HiranoToda2025PNAS} point out issues in \citet[Proposition 1]{Tirole1985} and \citet[Proposition 2]{Rhee1991} and refer to \citet{PhamTodaWP}.
        \item Several other unpublished manuscripts refer to either \citet[\S5.2]{HiranoToda2024JME} or \citet{PhamTodaWP}.
    \end{itemize}
\end{itemize}

In summary, among the 1{,}943 bibliographic items citing \citet{Tirole1985}, excluding duplicate items and papers by Hirano and Toda, only 4 (0.2\%) discuss or exploit the specific conclusion of Proposition 1(c), which are \citet[p.~351]{Burke1996}, \citet*[Footnote 8]{AllenBarlevyGale2017}, \citet[Footnote 29]{BassettoCui2018}, and \citet[Footnote 6]{MartinVentura2018}. The authors of the last two items acknowledge they learned Proposition 1(c) from Gadi Barlevy. None of these authors disputes the analysis of \citet{Tirole1985}, except \citet[Example 1]{BosiHa-HuyLeVanPhamPham2018}, who raise issues with Proposition 1(a).

\printbibliography
\end{refsection}

\end{document}